\newtheoremstyle{plain}
  {}   				
  {}   				
  {\itshape}  
  {}       		
  {\mdseries\scshape} 
  {.}         
  { } 				
  {\thmname{#1}\thmnumber{ #2}\ifx#3\empty\else\ (#3)\fi}
\theoremstyle{plain}
\newtheorem{theorem}{\underline{Theorem}}
\newtheorem{lemma}[theorem]{\underline{Lemma}}       	
\newtheorem{proposition}[theorem]{\underline{Proposition}}
\newtheoremstyle{definition}
  {}   				
  {}   				
  {}  				
  {}      		
  {\mdseries\scshape} 
  {.}         
  { } 				
  {\thmname{#1}\thmnumber{ #2}\ifx#3\empty\else\ (#3)\fi}
\theoremstyle{definition}
\newtheorem{definition}[theorem]{\underline{Definition}}
\newtheorem{example}[theorem]{\underline{Example}}
\newtheorem{remark}[theorem]{\underline{Remark}}
\newtheorem{assumption}[theorem]{\underline{Assumption}}
\renewcommand{\[}{\left[}
\newcommand\Eb{\mathds{E}}
\newcommand\Pb{\mathds{P}}
\newcommand\Rb{\mathds{R}}
\newcommand\Nb{\mathds{N}}
\newcommand\Ac{\mathscr{A}}
\newcommand\Ec{\mathscr{E}}
\newcommand\Hc{\mathscr{H}}
\newcommand\Lc{\mathscr{L}}
\newcommand\Nc{\mathscr{N}}
\newcommand\eps{\varepsilon}
\newcommand\sig{\sigma}
\newcommand\gam{\gamma}
\newcommand\Gam{\Gamma}
\newcommand\lam{\lambda}
\newcommand\del{\delta}
\renewcommand\phi{\varphi}
\newcommand\Hh{\widehat{H}}
\newcommand\Ebt{\widetilde{\Eb}}
\newcommand\Pbt{\widetilde{\Pb}}
\newcommand\Wt{\widetilde{W}}
\newcommand\pt{\widetilde{p}}
\renewcommand\d{\partial}
\newcommand\dd{\mathrm{d}}
\newcommand\ee{\mathrm{e}}
\begin{document}

\title{Optimal positioning in derivative securities in incomplete markets}

\author{
Tim Leung
\thanks{Department of Applied Mathematics, University of Washington.  \textbf{e-mail}: \url{timleung@uw.edu}}
\and
Matthew Lorig
\thanks{Department of Applied Mathematics, University of Washington.  \textbf{e-mail}: \url{mlorig@uw.edu}}
\and
Yoshihiro Shirai
\thanks{Department of Applied Mathematics, University of Washington.  \textbf{e-mail}: \url{yshirai@uw.edu}}
}

\date{This version: \today}

\maketitle

\begin{abstract}
\noindent
This paper analyzes a problem of optimal static hedging using derivatives in incomplete markets. The investor is assumed to have a risk exposure to two underlying assets. The hedging instruments are vanilla options written on a single underlying asset. The hedging problem is formulated as a utility maximization problem whereby the form of the optimal static hedge is determined. Among our results, a semi-analytical solution for the optimizer is found through variational methods for exponential, power/logarithmic, and quadratic utility. When vanilla options are available for each underlying asset, the optimal solution is related to the fixed points of a Lipschitz map. In the case of exponential utility, there is only one such fixed point, and subsequent iterations of the map converge to it.
\end{abstract}

%
%

\section{Introduction}
Static hedging portfolios are often useful for fully or partially replicating a contingent claim using vanilla derivatives, and for establishing no-arbitrage relationships or bounds for exotic options (see \cite{BreedenLitzenberger}, \cite{CarrEllisGupta}, \cite{HobsonLaurenceWang}, \cite{CarrNadtochiy}, \cite{CarrWu}, among others). Using traded derivatives is particularly crucial when the underlying asset has limited liquidity or is not traded. Once a static hedging portfolio is formed, no continuous monitoring of the underlying or dynamic rebalancing is needed. Therefore, static hedging is a robust alternative to dynamic hedging. \\[0.5em]
The static hedging portfolio depends on the available derivatives and on the investor's risk exposure and risk preferences. In their paper ``Optimal positioning in derivative securities,'' 
\cite{carr-madan} consider a two period financial market, in which the world of traded assets consists of cash\footnote{In fact, \cite{carr-madan} consider zero-coupon bonds instead of cash; we use cash here for simplicity.},
a single risky asset $X=(X_t)_{t \in \{0,T\}}$, as well as calls $C(K) = (C_t(K))_{t \in \{0,T\}}$ and/or puts $P_t(K) = (P_t(K))_{t \in \{0,T\}}$ written on $X$ at every strike $K \in \Rb_+$.  In this setting, they pose the question: \textit{given an investor with initial wealth $c$, subjective/statistical probability measure $\Pb$, and utility function $U$, what is the European derivative written on $X$ that, if purchased at time zero, maximizes the investor's expected utility}?  \\[0.5em]
Mathematically, this question is formulated as follows:
\begin{align}
	&\sup_{f \in \Ac} \Eb U(f(X_T)) , &
\Ac
	&:=	\{ f : \Ebt f(X_T) \leq c \} , \label{eq:problem-1}
\end{align}
where $\Eb$ and $\Ebt$ denote expectation under the measure $\Pb$ and the risk-neutral (i.e., pricing) measure $\Pbt$, respectively.  Assuming it exists, the density of $X$ under $\Pbt$, denoted by $\pt_X$, is observable from call and/or put prices and is given by
\begin{align}
\pt_X(K)
	&=	\d_K^2 C_0(K) = \d_K^2 P_0(K) , \label{eq:pt}
\end{align}
The function $f^*$ that maximizes the supremum in \eqref{eq:problem-1} is given by
\begin{align}
f^*(x)
	&=	[U']^{-1} \Big( - \lam^* \frac{\pt_X(x)}{p_X(x)}\Big) , &
c
	&=	\int  \dd x \, \pt_X(x) f^*(x)  .
\end{align}
where $[U']^{-1}$ denotes the inverse of the derivative of $U$ and $\lambda^*$ is the Lagrange multiplier. While a European claim with payoff $f^*(X_T)$ is not traded directly, as long as $f^*$ can be written as the difference of convex functions, the payoff can be synthesized from calls and puts using 
\begin{align}
f(X_T)
	&=	f(\kappa) + f'(\kappa) \Big( C_T(\kappa) - P_T(\kappa) \Big)
			+ \int_0^\kappa f''(K) P_T(K) \dd K + \int_\kappa^\infty f''(K) C_T(K) \dd K , &
\kappa
	&\in \Rb_+ , \label{eq:replication}
\end{align}
as shown in \cite{carr-madan-2}.\footnote{Note that convex functions (and so also their differences) are almost everywhere twice continuously differentiable by Theorem 25.5. in \cite{Rockafellar} and Lebesgue's Differentiation Theorem. Furthermore, a function $f$ of a real variable can be written as the difference of convex functions on a possibly unbounded interval $I$ if and only if $f$ has left and right derivatives of bounded variation on every closed bounded interval interior to $I$ (see \cite{Hartman}). Thus, even if $f$ is only continuous, one can always at least approximate $f$ on bounded domains with the difference of convex functions (such as, e.g., a polynomial).}  Thus, the investor can obtain the claim with payoff $f^*(X_T)$ by purchasing the above basket of calls and puts.
\\[0.5em]
With a slight modification, one can transform the optimal investment problem \eqref{eq:problem-1} into an optimal hedging problem.  Suppose the investor has sold a derivative with payoff $h(X_T)$ and wishes to purchase a European claim on $X$ to maximize his expected utility.  Mathematically, this problem can be written as follows
\begin{align}
	&\sup_{f \in \Ac} \Eb U(f(X_T)-h(X_T)) , &
\Ac
	&:=	\{ f : \Ebt f(X_T) \leq c \} . \label{eq:problem-1.5}
\end{align}
It is straightforward to show that the function $f^*$ that maximizes the supremum in \eqref{eq:problem-1.5} is given by
\begin{align}
f^*(x)
	&=	[U']^{-1} \Big( - \lam^* \frac{\pt_X(x)}{p_X(x)}\Big) + h(x) , &
c
	&=	\int f^*(x) \pt(x) \dd x .
\end{align}
In their paper ``Optimal static quadratic hedging,'' \cite{leung-lorig} solve problem \eqref{eq:problem-1.5} with the utility function $U$ replaced by a concave quadratic function $U(x) = - \tfrac{1}{2} x^2$ (i.e., they minimize the $L^2(\Pb)$ norm of $f(X_T) - h(X_T)$ subject to a cost constraint). When the value of the derivative to be hedged is less than the cost constraint, i.e., $\Ebt h(X_T) \leq c$, then $f^* = h$.  However, when the cost constraint is less than the value of the option to be hedged, i.e., $c < \Ebt h(X_T)$, the solution is more involved.
\\[0.5em]
The market considered in \cite{carr-madan} and \cite{leung-lorig} is complete in the sense that every European derivative written on $X$ can be replicated using \eqref{eq:replication}.  The purpose of this paper is to extend their results   to incomplete markets.  Specifically, we will consider the two problems described below.
\\[0.5em]
\textsc{Problem 1}: \textit{Hedging non-traded risks with vanilla options}. In Section \ref{sec:non-traded}, we consider the following optimization problem
\begin{align}
	&\sup_{f \in \Ac} \Eb U(f(X_T)-h(X_T,Y_T)) , &
\Ac
	&:=	\{ f : \Ebt f(X_T) \leq c \} . \label{eq:problem-2}
\end{align}
Problem 1, stated in \eqref{eq:problem-2}, describes an investor that has sold a claim with payoff $h(X_T,Y_T)$ and now wishes to purchase a claim with payoff $f(X_T)$ with a cost of $c$ or less in order to maximize his expected utility. As such, this is a constrained optimization problem. The value of the budget constraint is left unspecified, but one can think of it as the price at which the claim with payoff $h(X_T,Y_T)$ was sold. In this scenario, the investor is using $c$, the proceeds from the sale of the claim, to fund the corresponding static hedge.\\[0.5em]
The process $Y = (Y_t)_{t \in \{0,T\}}$ in \eqref{eq:problem-2} represents a non-traded risk.  As such, payoffs of the form $h(X_T,Y_T)$ cannot be replicated in general, and the market is incomplete.  The process $X$ represents the value of an asset on which calls and/or puts are traded at every strike $K \in \Rb_+$.  As such, the density $\pt_X$ of $X_T$ under $\Pbt$ is known from \eqref{eq:pt}.  Moreover, European payoffs of the form $f(X_T)$ can be synthesized from calls and puts using \eqref{eq:replication} as long as $f$ can be written as the difference of convex functions. \\[0.5em]
\textsc{Problem 2}: \textit{Hedging basket options with vanilla options}. In Section \ref{sec:basket}, we consider the following optimization problem
\begin{align}
	&\sup_{f,g \in \Ac} \Eb U(f(X_T) + g(Y_T) - h(X_T,Y_T)) , &
\Ac
	&:=	\{ f,g : \Ebt f(X_T) + \Ebt g(Y_T) \leq c \} . \label{eq:problem-3}
\end{align}
Problem 2, stated in \eqref{eq:problem-3}, describes an investor that has sold a claim with payoff $h(X_T,Y_T)$ and now wishes to purchase two claims with payoffs $f(X_T)$ and $g(Y_T)$, respectively, and at a total cost no higher than $c$, in order to maximize his expected utility.  The processes $X$ and $Y$ in \eqref{eq:problem-3} represent the values of assets on which calls and/or puts are traded at every strike $K \in \Rb_+$.  As such, the densities $\pt_X$ and $\pt_Y$ of $X_T$ and $Y_T$ under $\Pbt$ are known from \eqref{eq:pt}.  Moreover, European payoffs of the form $f(X_T)$ and $g(Y_T)$ can be synthesized from calls and puts using \eqref{eq:replication} as long as $f$ and $g$ can be written as the difference of convex functions.  Note that only the marginal densities $\pt_X$ and $\pt_Y$ are assumed to be observable, while the joint density $\pt_{X,Y}$ is not.  Moreover, payoffs of the form $h(X_T,Y_T)$ cannot be replicated in general.  As such, the market is incomplete.  Within the class of problems described by \eqref{eq:problem-3}, an important example is that in which $X$ and $Y$ represent the value of two stocks, a banker has sold a basket option with payoff $h(X_T,Y_T)$, and he now wishes to hedge his exposure by purchasing a European claim on $X_T$ and a European claim on $Y_T$.
\\[0.5em]
Hedging in incomplete markets also motivates us to consider a nonlinear pricing rule that accounts for the investor's risk preferences. To that end, we connect our static hedging framework with the notion of indifference pricing. In Section \ref{sec:indiff}, we define and derive a representation of the static hedging indifference pricing, along with  an illustrative example. Lastly, we include some concluding remarks in Section \ref{sec:conclude}.


\section{Hedging non-traded risks with vanilla options}
\label{sec:non-traded}

 In this section, we analyze Problem 2, stated in \eqref{eq:problem-2}, (henceforth, ``problem \eqref{eq:problem-2}'') when $U$ is an exponential, power/logarithmic, and quadratic utility function. Our results will be illustrated analytically and numerically in an example of hedging an option on a leveraged ETF.

\begin{assumption}\label{assume1} The following assumptions shall hold throughout this paper.
\begin{itemize*}
\item The subjective/statistical measure and the market's chosen pricing measure are equivalent (i.e., $\Pb \sim \Pbt$).
\item The random variables $X_T$ and $Y_T$ are jointly continuous, and, thus, the joint density $p_{X,Y}$ exists.
\item If $X$ and/or $Y$ are traded assets, then they are $\Pbt$-martingales: $\Ebt X_T = X_0$ and/or $\Ebt Y_T = Y_0$.
\end{itemize*}
\end{assumption}
\begin{remark} The processes $X$ and $Y$ are assumed to be scalar for simplicity throughout the paper. However, all the results of this section hold if $X \in \Rb^n$, $Y \in \Rb^m$, and $h: \Rb^n \times \Rb^m \to \Rb$.
\end{remark}

\noindent
To begin, let us define the Lagrangian $L$ associated with problem \eqref{eq:problem-2}.  We have
\begin{align}
L(\lam,f)
	&:=	\Eb U(f(X_T)-h(X_T,Y_T)) + \lam (\Ebt f(X_T) - c) \\
	&=	\int \dd x \int \dd y \, p_{X,Y}(x,y) U \big( f(x)-h(x,y) \big) + \lam \Big( \int \dd x \, \pt_X(x) f(x) - c \Big) . \label{eq:L1}
\end{align}
The first order optimality conditions associated with \eqref{eq:L1} are obtained by equating to zero the functional derivative of the Lagrangian $L$. Namely, given a suitable set $\mathscr{Q}$ of test functions, the following conditions must hold for all $q\in \mathscr{Q}$:
\begin{align}
0
	&=	\frac{ \del L }{\del f}(\lam^*,f^*)
	 =	\frac{\dd }{\dd \eps} L( \lam^* , f^* + \eps q) \Big|_{\eps = 0}  \\
	&=	\int \dd x \, q(x) \Big( \int \dd y \, p_{X,Y}(x,y) U' \big(f^*(x) - h(x,y) \big) + \lam^* \pt_{X}(x) \Big)  ,  \label{eq:dLdf} \\
0
	&=	\lam^* \Big( \int_\Rb \pt_X(x) f^*(x) \dd x - c \Big) . \label{eq:lambda-condition}
\end{align}
As \eqref{eq:dLdf} must hold for all functions $q\in \mathscr{Q}$, the function $f^*$ must satisfy
\begin{align}
0
	&=	\int \dd y \, p_{X,Y}(x,y) U' \big(f^*(x) - h(x,y) \big) + \lam^* \pt_{X}(x) .  \label{eq:optimality-1}
\end{align}
To go further, we must specify a particular form for $U$.

\begin{theorem}[Exponential Utility]
\label{thm:exp}
Suppose that the function $U$ in \eqref{eq:problem-2} is a utility function of exponential form $U(x) = - \ee^{- \gam x}/\gam$ with $\gam > 0$.  
Then, the optimizer $f^*$ is given by
\begin{align}
f^*(x)
	&=	c + \frac{1}{\gam} \log \Big( \frac{1}{\pt_X(x)}  \int \dd y \, p_{X,Y}(x,y) \ee^{\gam h(x,y)} \Big) \\ &\quad
			- \frac{1}{\gam} \int \dd \xi \, \pt_X(\xi) \log \Big( \frac{ 1}{\pt_X(\xi)} \int \dd y \, p_{X,Y}(\xi,y) \ee^{\gam h(\xi,y)}  \Big) . \label{eq:f-star-exp}
\end{align}
\end{theorem}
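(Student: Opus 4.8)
The plan is to solve the pointwise first-order condition \eqref{eq:optimality-1} explicitly, which is possible for exponential utility because $U'(x) = \ee^{-\gam x}$ and $f^*$ is a function of $x$ alone. Substituting $U'$ into \eqref{eq:optimality-1} and pulling the factor $\ee^{-\gam f^*(x)}$ out of the $y$-integral yields
\begin{align}
0 = \ee^{-\gam f^*(x)} \int \dd y\, p_{X,Y}(x,y)\, \ee^{\gam h(x,y)} + \lam^* \pt_X(x) .
\end{align}
Since $\pt_X > 0$ (Assumption \ref{assume1}, using $\Pb \sim \Pbt$) and the integral term is strictly positive, this equation is consistent only if $-\lam^* > 0$; solving for $\ee^{-\gam f^*(x)}$ and taking logarithms gives
\begin{align}
f^*(x) = -\tfrac{1}{\gam}\log(-\lam^*) + \tfrac{1}{\gam}\log\Bigl( \tfrac{1}{\pt_X(x)} \int \dd y\, p_{X,Y}(x,y)\, \ee^{\gam h(x,y)} \Bigr) ,
\end{align}
which already exhibits the structure of \eqref{eq:f-star-exp} up to the constant $-\tfrac{1}{\gam}\log(-\lam^*)$.

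Next I would identify the Lagrange multiplier. As $\lam^* \neq 0$, the complementary slackness condition \eqref{eq:lambda-condition} forces the budget constraint to be active, i.e.\ $\int \pt_X(x) f^*(x)\, \dd x = c$. Integrating the last display against $\pt_X$ and using $\int \pt_X(x)\,\dd x = 1$ produces a scalar equation whose solution is
\begin{align}
-\tfrac{1}{\gam}\log(-\lam^*) = c - \tfrac{1}{\gam} \int \dd \xi\, \pt_X(\xi) \log\Bigl( \tfrac{1}{\pt_X(\xi)} \int \dd y\, p_{X,Y}(\xi,y)\, \ee^{\gam h(\xi,y)} \Bigr) .
\end{align}
Substituting this back into the formula for $f^*$ gives precisely \eqref{eq:f-star-exp}.

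It then remains to confirm that this stationary point is the maximizer rather than merely a critical point of the Lagrangian. Here I would invoke concavity: the functional $f \mapsto \Eb U(f(X_T) - h(X_T,Y_T))$ is concave since $U$ is concave and $f \mapsto f(X_T) - h(X_T,Y_T)$ is affine, while $f \mapsto \Ebt f(X_T) - c$ is affine, so $\Ac$ is convex; consequently the first-order (Karush--Kuhn--Tucker) conditions \eqref{eq:dLdf}--\eqref{eq:lambda-condition}, which hold with $\lam^* < 0$ (the correct sign for a ``$\leq$'' constraint given the convention in \eqref{eq:L1}), are sufficient for global optimality over $\Ac$ via the saddle-point inequality $\Eb U(f(X_T)-h) \leq L(\lam^*,f) \leq L(\lam^*,f^*) = \Eb U(f^*(X_T)-h)$ valid for every feasible $f$. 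The routine parts — exponentiating, taking logs, integrating against $\pt_X$ — are immediate; the point needing the most care is the passage from the variational identity \eqref{eq:dLdf} to the pointwise equation \eqref{eq:optimality-1} and then to a genuine optimum. This rests on an implicit integrability hypothesis on $h$, namely that $\int p_{X,Y}(x,y)\, \ee^{\gam h(x,y)}\, \dd y < \infty$ for $\pt_X$-a.e.\ $x$ and that $\xi \mapsto \pt_X(\xi)\log\bigl( \pt_X(\xi)^{-1} \int p_{X,Y}(\xi,y)\, \ee^{\gam h(\xi,y)}\, \dd y \bigr)$ is integrable, which guarantees $f^* \in \Ac$ and legitimizes differentiation under the integral sign defining $\del L / \del f$; I would record this as a standing regularity assumption rather than chase the sharpest conditions.
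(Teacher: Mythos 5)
Your derivation is correct and follows essentially the same route as the paper: substitute $U'(x)=\ee^{-\gam x}$ into \eqref{eq:optimality-1}, factor out $\ee^{-\gam f^*(x)}$, solve for $f^*$ up to the constant $-\tfrac{1}{\gam}\log(-\lam^*)$, pin down $\lam^*$ from the active budget constraint, and substitute back. The only differences are cosmetic: you deduce $\lam^*<0$ directly from the sign of the first-order condition (the paper instead argues the constraint binds because $U$ is strictly increasing), and you add the concavity/saddle-point sufficiency argument and the integrability caveats, which the paper leaves implicit.
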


\begin{proof}
With $U(x) = -\ee^{-\gam x}/\gam$ we have $U'(x) = \ee^{- \gam x}$ and equation \eqref{eq:optimality-1} becomes
\begin{align}
0
	&=	\int \dd y \, p_{X,Y}(x,y) \ee^{-\gam (f^*(x) - h(x,y))} + \lam^* \pt_{X}(x) \\
	&=	\ee^{-\gam f^*(x) } \int \dd y \, p_{X,Y}(x,y) \ee^{\gam h(x,y)} + \lam^* \pt_{X}(x) . 
\end{align}
Thus, we find that $f^*$ must satisfy
\begin{align}
f^*(x)
	&=	\frac{-1}{\gam} \log ( - \lam^*) + \frac{1}{\gam} \log \Big( \frac{1}{\pt_X(x)} \int \dd y \, p_{X,Y}(x,y) \ee^{\gam h(x,y)} \Big) . \label{eq:f-star}
\end{align}
Next, note that when $U$ is a strictly increasing utility function (as is the case here), it is always optimal for $f^*$ to satisfy $\Ebt f^*(X_T) = c$. In fact, if $\Ebt f^*(X_T) < c$, then $U(f^*(X_T)+c-\Ebt f^*(X_T))> U(f^*(X_T))$, which contradicts the optimality of $f^*$. Hence, $\Ebt f^*(X_T) = c$. Then, $\lam^* \neq 0$, and \eqref{eq:lambda-condition} becomes
\begin{align}
0
	&=	\int_\Rb \dd x \, \pt_X(x) f^*(x) - c . \label{eq:dLdlambda}
\end{align}
Inserting expression \eqref{eq:f-star} for $f^*$ into \eqref{eq:dLdlambda}, we find that $\lam^*$ must satisfy
\begin{align}
\frac{1}{\gam} \log ( - \lam^*)
	&=	- c +  \frac{1}{\gam}  \int \dd x \, \log \Big( \frac{1}{\pt_X(x)} \int \dd y \,  p_{X,Y}(x,y) \ee^{\gam h(x,y)} \Big) . \label{eq:lambda-star}
\end{align}
Finally, by inserting equation \eqref{eq:lambda-star} into \eqref{eq:f-star}, we see that $f^*$ is given by \eqref{eq:f-star-exp}, as claimed.
\end{proof}

\begin{remark} As shown in the proof of Theorem \ref{thm:exp}, the Lagrange multiplier in the case of exponential utility satisfies $\lambda^*\neq 0$, and so the following integrability condition on $h$ ensures that the optimizer $f^*$ is finite:
\begin{align*}
\int \dd x \, \log \Big( \frac{1}{\pt_X(x)} \int \dd y \,  p_{X,Y}(x,y)\ee^{\gam h(x,y)} \Big)<\infty.
\end{align*}
\end{remark}


\begin{theorem}[Power/Logarithmic utility]
\label{thm:pow}
Suppose that the function $U$ in \eqref{eq:problem-2} is a utility function of power or logarithmic form.  That is, 
$U(x) = x^{1-\gam}/(1-\gam)$ if $\gam \in (0,1) \cup (0,\infty)$ and $U(x) = \log x$ if $\gam = 1$.  
For any $z > 0$, define
\begin{align}
H(x,z)
	&:=	\int \dd y \, p_{X,Y}(x,y) \ee^{z^{1/\gam}h(x,y)} , \label{eq:def-H}
\end{align}
and, for any $f>0$, define the following Laplace-like transform of $H(x, \, \cdot \, )$
\begin{align}
\Lc[ H( x , \, \cdot \, ) ](f) \equiv \Hh(x,f) 
	&:= \frac{1}{\gam \Gam(\gam)} \int_0^\infty \dd z \, \ee^{-z^{1/\gam}f} H(x,z) , \label{eq:transform}
\end{align}
where $\Gam$ is the Euler-Gamma function. Then, the optimizer $f^*$ satisfies
\begin{align}
f^*(x)
	&=	\Hh^{-1}(x,- \lam^* \pt_X(x)) , &
c
	&=	\int \dd x \, \pt_X(x) f^*(x) , \label{eq:f-star-power}
\end{align}
where $\Hh^{-1}(x, \, \cdot \, )$ denotes the inverse of $\Hh(x, \, \cdot \,)$.
\end{theorem}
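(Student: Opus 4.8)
The plan is to feed the power/logarithmic form into the first-order condition \eqref{eq:optimality-1} and then, via a Gamma-integral identity and a change of variables, recognize the resulting $y$-integral as the transform $\Hh$ of \eqref{eq:transform}. Since $U(x) = x^{1-\gam}/(1-\gam)$ (and $U(x) = \log x$ when $\gam = 1$) gives $U'(x) = x^{-\gam}$, equation \eqref{eq:optimality-1} reads
\begin{align*}
- \lam^* \pt_X(x) = \int \dd y \, p_{X,Y}(x,y) \big( f^*(x) - h(x,y) \big)^{-\gam} .
\end{align*}
The right-hand side is strictly positive, so this forces $\lam^* < 0$ (recall $\pt_X > 0$ since $\Pb \sim \Pbt$) and, implicitly, $f^*(x) > h(x,y)$ for $p_{X,Y}(x, \, \cdot \,)$-almost every $y$, which I would record as a standing requirement so that the power is well defined.

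Next I would apply the Gamma integral $a^{-\gam} = \Gam(\gam)^{-1} \int_0^\infty \dd t \, t^{\gam-1} \ee^{-a t}$ (valid for $a > 0$, $\gam > 0$) with $a = f^*(x) - h(x,y)$, and invoke Tonelli's theorem to swap the $t$- and $y$-integrals (the integrand being nonnegative), obtaining
\begin{align*}
- \lam^* \pt_X(x) = \frac{1}{\Gam(\gam)} \int_0^\infty \dd t \, t^{\gam-1} \ee^{-t f^*(x)} \int \dd y \, p_{X,Y}(x,y) \ee^{t h(x,y)} .
\end{align*}
The substitution $z = t^\gam$ (so $t = z^{1/\gam}$ and $t^{\gam-1} \dd t = \gam^{-1} \dd z$) converts the inner integral into $H(x,z)$ of \eqref{eq:def-H} and the whole right-hand side into $\Hh(x, f^*(x))$ of \eqref{eq:transform}; the logarithmic case $\gam = 1$ is subsumed with no modification, since then $z = t$ and $\Gam(1) = 1$. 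Hence $\Hh(x, f^*(x)) = - \lam^* \pt_X(x)$. To invert, note that $f \mapsto \Hh(x, f)$ is strictly decreasing on its domain (as $\ee^{-z^{1/\gam} f}$ is strictly decreasing in $f$ for each $z > 0$) and hence injective, and a short check of its range — it decays to $0$ as $f \to \infty$ and increases without bound, or up to the value at which the defining integral first diverges, as $f$ decreases — shows $- \lam^* \pt_X(x) > 0$ lies in it; thus $\Hh^{-1}(x, \, \cdot \,)$ exists and $f^*(x) = \Hh^{-1}(x, - \lam^* \pt_X(x))$, the first relation in \eqref{eq:f-star-power}.

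For the budget equation, I would argue exactly as in the proof of Theorem \ref{thm:exp}: since $U$ is strictly increasing, if $\Ebt f^*(X_T) < c$ then replacing $f^*$ by $f^* + (c - \Ebt f^*(X_T))$ stays in $\Ac$, keeps the argument of $U$ positive, and strictly raises $\Eb U(f^*(X_T) - h(X_T, Y_T))$, contradicting optimality; hence $\Ebt f^*(X_T) = c$, i.e.\ $c = \int \dd x \, \pt_X(x) f^*(x)$. The main obstacle I expect is not any single computation but the domain bookkeeping: guaranteeing $f^*(x) - h(x,y) > 0$, that $\Hh(x, \, \cdot \,)$ is finite (which requires a mild growth control on $h(x, \, \cdot \,)$, e.g.\ an essential upper bound in $y$), and that its range indeed covers $- \lam^* \pt_X(x)$; granting these, everything else reduces to the Gamma identity and change of variables above.
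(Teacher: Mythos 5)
Your proposal is correct and follows essentially the same route as the paper: plug $U'(x)=x^{-\gamma}$ into the first-order condition \eqref{eq:optimality-1}, use the Gamma-integral identity (which the paper states directly in the form $x^{-\gamma}=\frac{1}{\gamma\Gamma(\gamma)}\int_0^\infty \ee^{-z^{1/\gamma}x}\,\dd z$ and you derive via the substitution $z=t^\gamma$), apply Tonelli to identify $\Hh(x,f^*(x))=-\lambda^*\pt_X(x)$, invert by strict monotonicity of $\Hh(x,\cdot)$, and obtain the budget equality from the strict monotonicity of $U$ exactly as in Theorem \ref{thm:exp}. Your additional bookkeeping on the sign of $\lambda^*$, the requirement $f^*-h>0$, and the range of $\Hh$ is a welcome refinement of points the paper only records in the remark following the theorem.
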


\begin{proof}
With $U$ in the power or logarithmic form, we have $U'(x) = 1/x^\gam$ and \eqref{eq:optimality-1} becomes
\begin{align}
0
	&=	\int \dd y \, p_{X,Y}(x,y) \frac{1}{(f^*(x) - h(x,y))^\gam} + \lam^* \pt(x) . \label{eq:1}
\end{align}
Next, using the following identity
\begin{align}
\frac{1}{x^\gam}
	&=	\frac{1}{\gam \Gam(\gam)} \int_0^\infty \dd z \, \ee^{- z^{1/\gam}x} , &
x
	&> 0 ,
\end{align}
we can re-write \eqref{eq:1} as
\begin{align}
0
	&=	\int \dd y \, p_{X,Y}(x,y) \frac{1}{\gam \Gam(\gam)} \int_0^\infty \dd z \, \ee^{-z^{1/\gam} (f^*(x) - h(x,y) )} + \lam^* \pt_X(x) \\
	&=	\frac{1}{\gam \Gam(\gam)} \int_0^\infty \dd z \, \ee^{-z^{1/\gam}f^*(x)} \int \dd y \, p_{X,Y}(x,y) \ee^{z^{1/\gam}h(x,y)} + \lam^* \pt_X(x) 
			\label{eq:tonelli} \\
	&=	\frac{1}{\gam \Gam(\gam)} \int_0^\infty \dd z \, \ee^{-z^{1/\gam}f^*(x)} H(x,z) + \lam^* \pt_X(x)  \label{eq:2} \\
	&=	\Hh(x, f^*(x)) + \lam^* \pt_X(x) , \label{eq:2.1}
\end{align}
where \eqref{eq:tonelli} follows from Tonelli's theorem, 
\eqref{eq:2} follows from the definition \eqref{eq:def-H} of $H$,
and \eqref{eq:2.1} follows from the definition \eqref{eq:transform} of $\Hh$.
Noting from \eqref{eq:def-H} that $H$ is positive by definition,
it follows from \eqref{eq:transform} that $\Hh( x , \, \cdot \, )$ is strictly decreasing and, therefore, invertible.  
Hence, we have from \eqref{eq:2.1} that $f^*$ satisfies \eqref{eq:f-star-power}, where the second equality follows because, when $U$ is strictly increasing, $f^*$ must satisfy $\Ebt f^*(X_T) = c$.
\end{proof}


\begin{remark} 
Power/logarithmic utility requires a positive argument, and Theorem \ref{thm:pow} requires $f^*-h>0$. By equation \eqref{eq:1} we then get that $\lambda^*>0$, and so it must be the case that $f^*>0$ by equation \eqref{eq:f-star-power}.
\end{remark}

\begin{theorem}[Mean-variance Hedging]
\label{thm:quad}
Suppose the function $U$ in problem \eqref{eq:problem-2} is a concave quadratic function $U(x) = \gam x -  \tfrac{1}{2} x^2$. Then, if $\Ebt \Eb[ h(X_T,Y_T) | X_T ] + \gam \leq c$, the optimizer $f^*$ satisfies
\begin{align}
f^*(x)
	&=	\Eb[ h(X_T,Y_T) | X_T = x] + \gam .
\end{align}
Otherwise, we have
\begin{align}
f^*(x)
	&=	\Eb[ h(X_T,Y_T) | X_T = x] + \frac{ c - \gam - \Ebt \, \Eb[ h(X_T,Y_T) | X_T] }{ \Ebt \big( \pt_X(X_T) / p_X(X_T) \big) } \frac{ \pt_X(x) }{p_X(x)} + \gam .
			\label{eq:f-star-case-2}
\end{align}
\end{theorem}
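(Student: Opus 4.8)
The plan is to follow the same variational route as in the exponential and power/logarithmic cases, the new feature being that the quadratic $U$ is not monotone (it is satiated at $x = \gam$), so the budget constraint need not bind and a genuine case split is forced.

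First I would substitute $U'(z) = \gam - z$ into the first-order condition \eqref{eq:optimality-1}. Writing $p_X(x) = \int \dd y \, p_{X,Y}(x,y)$ and using the definition of the conditional density, the inner integral $\int \dd y \, p_{X,Y}(x,y) \, (\gam - f^*(x) + h(x,y))$ collapses to $p_X(x) \big( \gam - f^*(x) + \Eb[ h(X_T,Y_T) | X_T = x ] \big)$, so \eqref{eq:optimality-1} becomes a pointwise \emph{linear} equation in $f^*(x)$ whose solution is
\begin{align}
f^*(x) &= \gam + \Eb[ h(X_T,Y_T) | X_T = x ] + \lam^* \, \frac{\pt_X(x)}{p_X(x)} .
\end{align}
This single identity already contains both asserted formulas; it remains to pin down $\lam^*$ through the complementary-slackness condition \eqref{eq:lambda-condition}.

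Next I would split into the two cases. If $\lam^* = 0$, the candidate reduces to $f^*(x) = \gam + \Eb[ h(X_T,Y_T) | X_T = x ]$, whose cost is $\Ebt f^*(X_T) = \gam + \Ebt \, \Eb[ h(X_T,Y_T) | X_T ]$; this $f^*$ is admissible precisely under the first hypothesis $\gam + \Ebt \, \Eb[ h(X_T,Y_T) | X_T ] \leq c$, and, being the unconstrained maximizer of the concave functional $f \mapsto \Eb U(f(X_T) - h(X_T,Y_T))$, it is then a fortiori the constrained maximizer. If instead $\gam + \Ebt \, \Eb[ h(X_T,Y_T) | X_T ] > c$, then $\lam^* = 0$ would violate feasibility, so $\lam^* \neq 0$ and \eqref{eq:lambda-condition} forces $\Ebt f^*(X_T) = c$; applying $\Ebt$ to the displayed identity and solving the resulting scalar linear equation gives
\begin{align}
\lam^* &= \frac{ c - \gam - \Ebt \, \Eb[ h(X_T,Y_T) | X_T ] }{ \Ebt \big( \pt_X(X_T) / p_X(X_T) \big) } ,
\end{align}
with the denominator a strictly positive constant, so $\lam^*$ is well defined (and negative, consistent with the Karush--Kuhn--Tucker conditions for a maximization under a $\leq$ constraint). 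Substituting this $\lam^*$ back into the displayed identity yields \eqref{eq:f-star-case-2}.

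I expect the point requiring the most care --- rather than a deep obstacle --- to be the justification of the case split itself: unlike the exponential and power/logarithmic cases, one cannot invoke monotonicity of $U$ to argue that the budget is always exhausted, so optimality must be read off from the first-order conditions together with concavity of the Lagrangian $L(\lam^*, \, \cdot \, )$ in $f$, which guarantees that any stationary point is in fact a maximizer. A secondary matter is recording the standing integrability hypotheses under which $\Eb[ h(X_T,Y_T) | X_T = x ]$ and $\Ebt \big( \pt_X(X_T) / p_X(X_T) \big)$ are finite, so that all the manipulations above are legitimate.
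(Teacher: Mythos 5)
Your proposal is correct and follows essentially the same route as the paper: substitute $U'(z)=\gam-z$ into \eqref{eq:optimality-1}, collapse the inner integral to obtain the linear pointwise identity $f^*(x)=\gam+\Eb[h(X_T,Y_T)\mid X_T=x]+\lam^*\pt_X(x)/p_X(x)$, and then resolve $\lam^*$ via complementary slackness in the two cases. Your additional remarks on why the $\lam^*=0$ candidate is the unconstrained maximizer and on the sign and well-definedness of $\lam^*$ are sound elaborations of points the paper leaves implicit.
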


\begin{proof}
Noting that $U'(x) = \gam - x$, equation \eqref{eq:optimality-1} becomes
\begin{align}
0
	&=	\int \dd y \, p_{X,Y}(x,y) (\gam - f^*(x) + h(x,y)) + \lam^* \pt_X(x)  \\
	&=	( \gam - f^*(x) ) p_X(x) + \int \dd y \, p_{X,Y}(x,y) h(x,y) + \lam^* \pt_X(x) .
\end{align}
Hence, we find that $f^*$ satisfies
\begin{align}
f^*(x)
	&=	\frac{1}{p_X(x)} \int \dd y \, p_{X,Y}(x,y) h(x,y) + \lam^* \frac{ \pt_X(x) }{p_X(x)} + \gam   \\
	&=	\Eb[ h(X_T,Y_T) | X_T = x] + \lam^* \frac{ \pt_X(x) }{p_X(x)} + \gam . \label{eq:f-star-quad}
 \end{align}
When $U$ is a concave quadratic function, we cannot be sure that the optimizer $f^*$ satisfies $\Ebt f^*(X_T) = c$.  Thus, in addition to \eqref{eq:f-star-quad}, we have from \eqref{eq:lambda-condition} that either
\begin{align}
\lam^*
	&=	0 , &
	&\text{and hence}&
f^*(x)
	&=	\Eb[ h(X_T,Y_T) | X_T = x] + \gam ,
\end{align}
which holds if $\Ebt \Eb[ h(X_T,Y_T) | X_T ] + \gam \leq c$, or
\begin{align}
\lam^*
	&\neq 0 , &
	&\text{and hence}&
c
	&=	\Ebt \, \Eb[ h(X_T,Y_T) | X_T] + \lam^* \Ebt \Big( \frac{ \pt_X(X_T) }{p_X(X_T)} \Big) + \gam , 
\end{align}
from which it follows that
\begin{align}
\lam^*
	&=	\frac{ c - \gam - \Ebt \, \Eb[ h(X_T,Y_T) | X_T] }{ \Ebt \big( \pt_X(X_T) / p_X(X_T) \big) } . \label{eq:lambda-star-quad}
\end{align}
In this case, by inserting \eqref{eq:lambda-star-quad} into \eqref{eq:f-star-quad}, we find that $f^*$ is given by \eqref{eq:f-star-case-2}.
\end{proof}

\begin{example}
\label{ex:LETF}
Consider a continuous-time financial market $t \in [0,T]$ in which the value of an exchange traded fund (ETF) $S = \ee^X$ has dynamics of the form
\begin{align}
\dd X_t
	&=	\mu - \tfrac{1}{2} \sig^2 \dd t + \sig \dd W_t 
	=	- \tfrac{1}{2} \sig^2 \dd t + \sig \dd \Wt_t
\end{align}
where $W$ and $\Wt$ are Brownian motions under $\Pb$ and $\Pbt$, respectively, and the volatility $\sig$ is random, independent of $X$, and unknown at time zero.  Let $L = \ee^Z$ be the value of a leveraged ETF (LETF) with leverage ratio $\beta$.  The relationship between $S$ and $L$ is $\dd L_t / L_t = \beta \dd S_t / S_t$, from which we find
\begin{align}
Z_T \equiv z(X_T,Y_T)
	&=	\beta X_T - \tfrac{1}{2} \beta (\beta - 1) Y_T , &
Y_T
	&:=	T \sig^2 ,
\end{align}
where we have assumed $X_0 = Z_0 = 0$ for simplicity.  Suppose a banker has sold an option on the LETF with payoff $g(Z_T)=g(z(X_T,Y_T))=:h(X_T,Y_T)$ and now wishes to solve problem \eqref{eq:problem-2} for various choices of $U$.  Note that 
\begin{align}
p_{X,Y}(x,y)
	&=	p_{X|Y}(x,y) p_Y(y) , &
\pt_{X,Y}(x,y)
	&=	\pt_{X|Y}(x,y) \pt_Y(y) ,
\end{align}
where $p_{X|Y}(x,y) \dd x := \Pb(X_T \in \dd x | Y_T = y )$ and likewise for $\pt_{X|Y}(x,y)$.  Thus, given $\lambda,\nu>0$, if we assume $Y_T \sim \Ec(\lam)$ under $\Pb$ and $Y_T \sim \Ec(\nu)$ under $\Pbt$, where, for any $\ell>0$, $\Ec(\ell)$ denotes an exponential distribution with parameter $\ell$, we have
\begin{align}
p_{X,Y}(x,y)
	&=	\frac{\lam \ee^{- \lam y}}{\sqrt{2 \pi y}} \exp \Big( - \frac{(x - \mu T + y/2)^2}{2y} \Big)  , &
\pt_{X,Y}(x,y)
	&=	\frac{\nu \ee^{- \nu y}}{\sqrt{2 \pi y}} \exp \Big( - \frac{(x + y/2)^2}{2y} \Big)  ,
\end{align}
where we have used the fact that $X_T | Y_T = y \sim \Nc( \mu T - y/2, y)$ under $\Pb$ and $X_T | Y_T = y \sim \Nc( - y/2, y)$ under $\Pbt$.  The (unconditional) density of $X_T$ under $\Pb$ and $\Pbt$, obtained by integrating the joint densities with respect to $y$, are
\begin{align}
p_X(x)
	&=		\frac{2 \lam }{\sqrt{8 \lam +1}} \exp \Big( -\frac{|x-\mu T|}{2} \sqrt{8 \lam +1}  -\frac{(x-\mu T)}{2} \Big) , &
\pt_X(x)
	&=		\frac{2 \nu }{\sqrt{8 \nu +1}} \exp \Big( -\frac{|x|}{2} \sqrt{8 \nu +1}  -\frac{x}{2} \Big) .
\end{align}
Given $g$, the above information is sufficient to explicitly compute $f^*$ in Theorems \ref{thm:exp} (exponential utility) and \ref{thm:quad} (mean-variance).  As the expressions for $f^*$ are quite long, we will not present them here.  In Figure \ref{fig:LETF}, we plot $f^*(\log s)$ as a function of $s$ for $\beta \in \{2,-2\}$ and $g(z)=z$.  For comparison, we also plot $h(\log s, 1/\lam)$ (note that $\Eb \sig^2 T = 1/\lam$).

\begin{figure}[!ht]
\begin{tabular}{cc}
\includegraphics[width=0.45\textwidth]{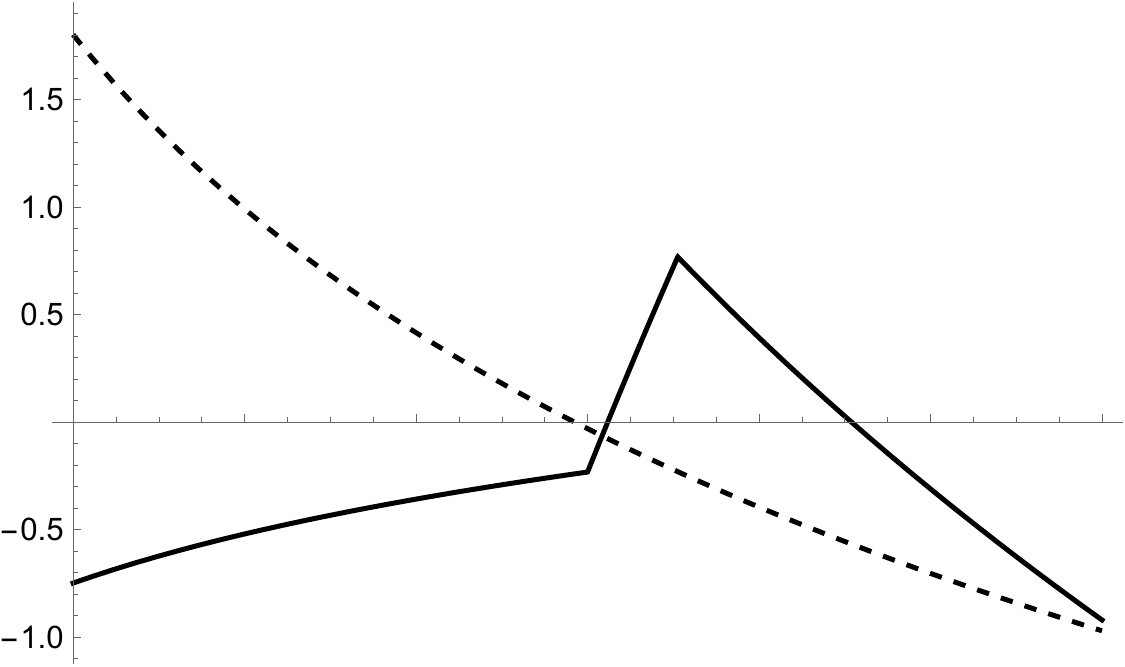}&
\includegraphics[width=0.45\textwidth]{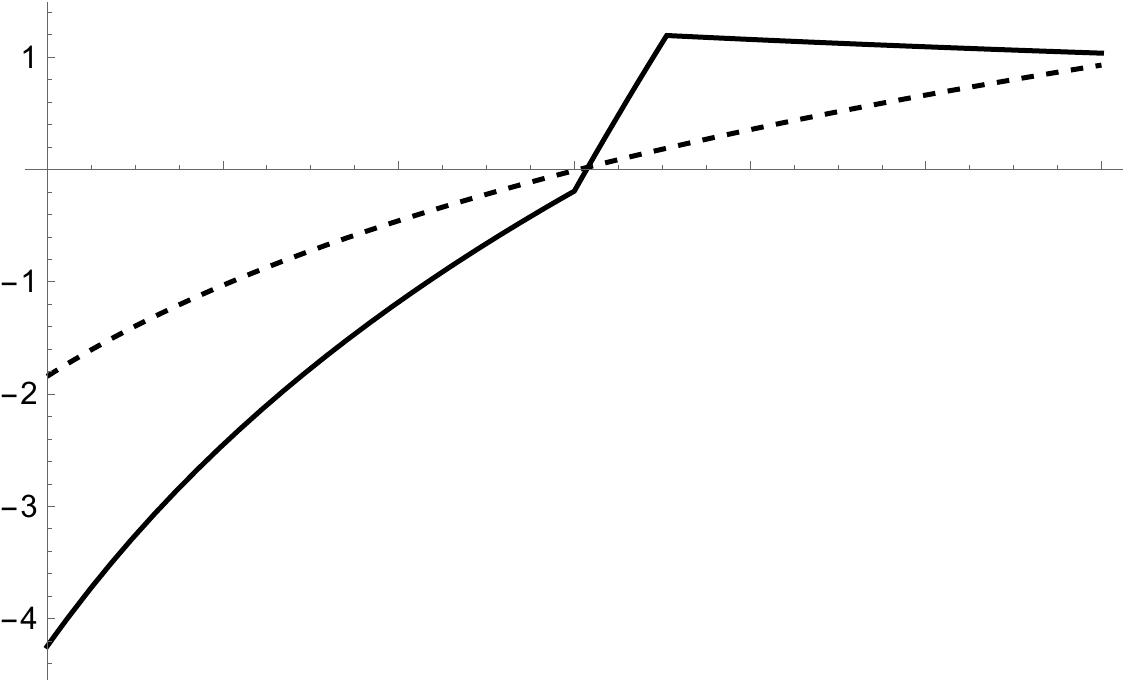}\\
$\beta = -2$, $U(x) = - \ee^{- \gam x} / \gam$ & $\beta = 2$, $U(x) = - \ee^{- \gam x} / \gam$ \\
\includegraphics[width=0.45\textwidth]{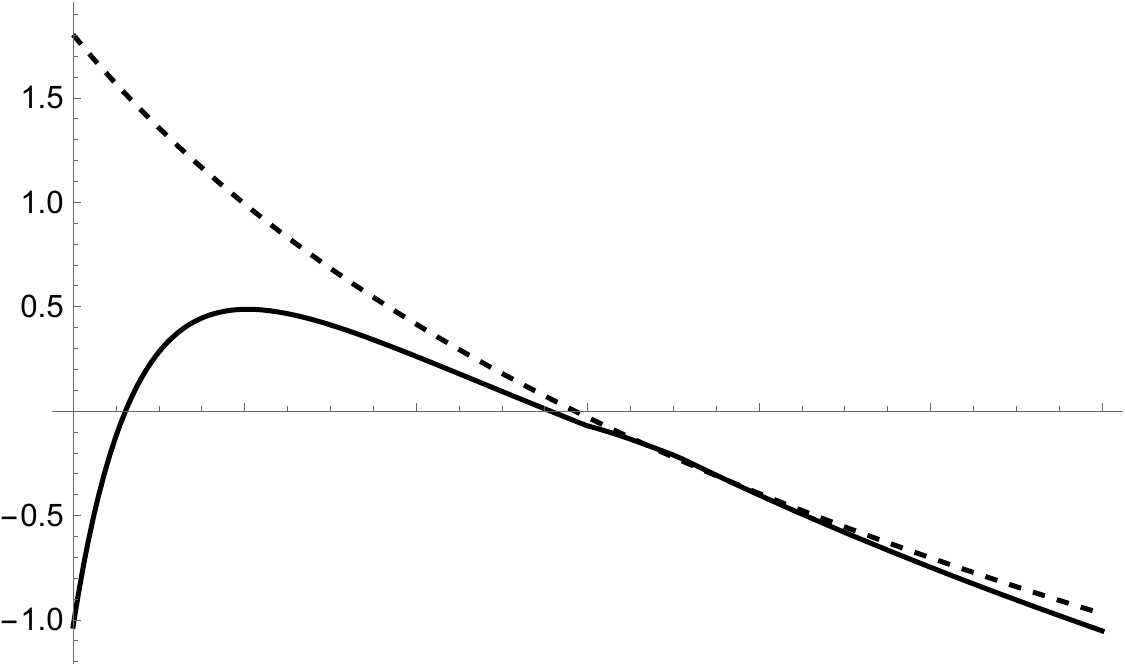}&
\includegraphics[width=0.45\textwidth]{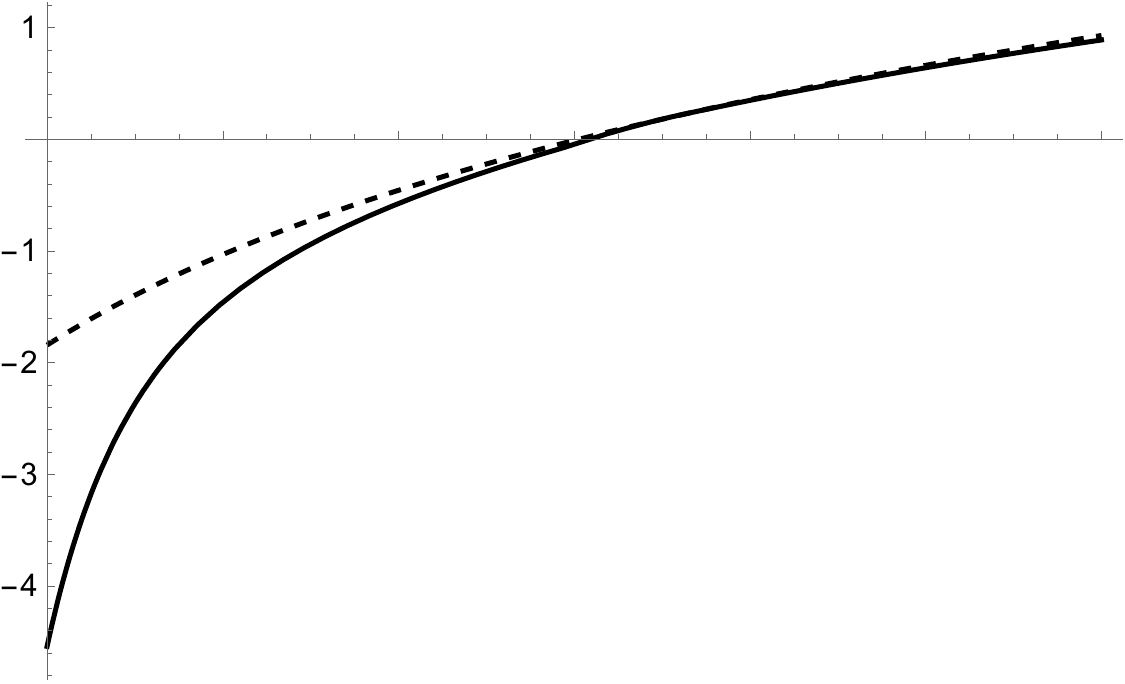}\\
$\beta = -2$, $U(x) = - \tfrac{1}{2} x^2$ & $\beta = 2$, $U(x) = - \tfrac{1}{2} x^2$
\end{tabular}
\caption{
For the model considered in Example \ref{ex:LETF}, we plot $f^*(\log s)$ (solid line) and $h(\log s,1/\lam)$ (dashed line) as a function of $s$.  Model parameters in the plot are as follows: $\lam = 1/(0.1)^2$, $\nu = 1/(0.15)^2$, $\mu = 0,1$ and $T = 1.0$.  For the case of exponential utility, we set $\gam = 2$ and the cost constraint equal to the price of the option sold $c = \Ebt h(X_T,Y_T)$.  In the case of mean-variance maximization, we set $\gam = 0$, and the cost constraint is given by
$
c
	=	\int \dd x \, \pt_X(x) \int \dd y \, \frac{ p_{X,Y}(x,y) }{ p_X(x) } h(x,y) + \gam - \Big| \int \dd x \int \dd y \, \pt_{X,Y}(x,y) h(x,y) \Big| ,
$
which guarantees that $f^*$ is given by \eqref{eq:f-star-case-2}.
}
\label{fig:LETF}
\end{figure}

\end{example}

 \section{Hedging basket options with vanilla options}
\label{sec:basket}
In this section, we solve Problem 2, stated in \eqref{eq:problem-3} (henceforth, ``problem \eqref{eq:problem-3}''), when $U$ is an exponential utility function and when $U$ is a concave quadratic utility function. In addition to deriving the optimizers, we construct for the case of exponential utility a practical iterative scheme to converge to the optimal position. We illustrate our results with several numerical examples.\\[0.5em]
 As in Section \ref{sec:non-traded}, we assume that the conditions in Assumption \ref{assume1} hold. First, let us define the Lagrangian $L$ associated with problem \eqref{eq:problem-3}.  We have
\begin{align}
L(\lam,f,g)
	&:=	\Eb U \big( f(X_T) + g(Y_T) - h(X_T,Y_T) \big) + \lam ( \Ebt f(X_T) + \Ebt g(Y_T) - c) \\
	&=	\int \dd x \int \dd y \, p_{X,Y}(x,y) U \big( f(x) + g(y) -h(x,y) \big) + \lam \Big( \int \dd x \, \pt_X(x) f(x) + \int \dd y \, \pt_Y(y) g(y) - c \Big) .
\end{align}
The first order optimality conditions are
\begin{align}
0
	&=	\frac{ \del L }{\del f}(\lam^*,f^*,g^*)
	 =	\frac{\dd }{\dd \eps} L( \lam^* , f^* + \eps q, g^*) \Big|_{\eps = 0}  \\
	&=	\int \dd x \, q(x) \Big( \int \dd y \, p_{X,Y}(x,y) U' \big(f^*(x) + g^*(y) - h(x,y) \big) + \lam^* \pt_X(x) \Big)  ,  \label{eq:dLdf-2} \\
0
	&=	\frac{ \del L }{\del g}(\lam^*,f^*,g^*)
	 =	\frac{\dd }{\dd \eps} L( \lam^* , f^* , g^* + \eps q) \Big|_{\eps = 0}  \\
	&=	\int \dd y \, q(y) \Big( \int \dd x \, p_{X,Y}(x,y) U' \big(f^*(x) + g^*(y) - h(x,y) \big) + \lam^* \pt_Y(y) \Big)  ,  \label{eq:dLdg} \\
0
	&=	\lam^* \Big( \int \dd x \, \pt_X(x) f^*(x) + \int \dd y \, \pt_Y(y) g(y) - c \Big) . \label{eq:lambda-condition-2}
\end{align}
As \eqref{eq:dLdf-2} and \eqref{eq:dLdg} must hold for all suitable functions $q$, we find that $f^*$ and $g^*$ must satisfy
\begin{align}
0
	&= \int \dd y \, p_{X,Y}(x,y) U' \big(f^*(x) + g^*(y) - h(x,y) \big) + \lam^* \pt_X(x) , \label{eq:opt-f} \\
0
	&=	\int \dd x \, p_{X,Y}(x,y) U' \big(f^*(x) + g^*(y) - h(x,y) \big) + \lam^* \pt_Y(y) . \label{eq:opt-g} 
\end{align}
To go further, we must specify the function $U$.

\begin{remark}\label{Uniq}[On the uniqueness of $(f^*,g^*)$]
If $f^*$ and $g^*$ are optimal, then, so are $f^* + k$ and $g^* - k$, where $k \in \Rb$.  As such, it is clear that we will not have unique solutions to problem \eqref{eq:problem-3}.
\end{remark}

\subsection{Exponential utility}\label{exp2risks}

\begin{theorem}[Exponential utility]
\label{thm:exp2}
Suppose that the function $U$ in \eqref{eq:problem-3} is a utility function of exponential form $U(x) = - \ee^{- \gam x}$ with $\gam > 0$. Then, the pair of optimizers $f^*$ and $g^*$ satisfy
\begin{align}
f^*(x)
	&=	\frac{-1}{\gam} \log ( - \lam^*) + \frac{1}{\gam} \log \Big( \frac{ 1 }{\pt_X(x)} \int \dd y \, p_{X,Y}(x,y) \ee^{\gam (h(x,y) - g^*(y))} \Big) , 
			\label{eq:fstar} \\
g^*(y)
	&=	\frac{-1}{\gam} \log ( - \lam^*) + \frac{1}{\gam} \log \Big( \frac{ 1 }{\pt_Y(y)} \int \dd x \, p_{X,Y}(x,y) \ee^{\gam (h(x,y) - f^*(x))} \Big) , 
			\label{eq:gstar} \\
c
	&=	\int \dd x \, p_X(x) f^*(x) + \int \dd y \, \pt_Y(y) g^*(y) . \label{eq:dLdlambda-2}
\end{align}
\end{theorem}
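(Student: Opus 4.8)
The plan is to adapt, essentially verbatim, the argument used for the single-asset case in Theorem~\ref{thm:exp}, the only new feature being that there are now two coupled first-order conditions, \eqref{eq:opt-f} and \eqref{eq:opt-g}, instead of one. With $U(x) = -\ee^{-\gam x}$ we have $U'(x) = \gam\ee^{-\gam x}$. Substituting this into \eqref{eq:opt-f} and observing that $f^*(x)$ is constant in the $y$-variable of integration, the factor $\ee^{-\gam f^*(x)}$ pulls out of the integral, so that \eqref{eq:opt-f} becomes
\begin{align*}
\gam\,\ee^{-\gam f^*(x)}\int\dd y\,p_{X,Y}(x,y)\,\ee^{\gam(h(x,y)-g^*(y))} + \lam^*\pt_X(x) = 0 .
\end{align*}
Since $U'>0$, the relation is consistent only if $\lam^*<0$; dividing through, taking logarithms, and folding the constant $\tfrac{1}{\gam}\log\gam$ into (the definition of) $\lam^*$ gives exactly \eqref{eq:fstar}. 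The identical manipulation applied to \eqref{eq:opt-g} --- now pulling $\ee^{-\gam g^*(y)}$ out of the $x$-integral and using $\pt_Y$ in place of $\pt_X$ --- gives \eqref{eq:gstar}.

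It remains to produce the third equation, \eqref{eq:dLdlambda-2}. As in Theorem~\ref{thm:exp}, because $U$ is strictly increasing the budget constraint must be active at the optimum: if $\Ebt f^*(X_T)+\Ebt g^*(Y_T) < c$, then replacing $f^*$ by $f^* + \big(c-\Ebt f^*(X_T)-\Ebt g^*(Y_T)\big)$ keeps the pair feasible while strictly raising the objective, contradicting optimality. Hence $\lam^*\neq 0$, and \eqref{eq:lambda-condition-2} collapses to \eqref{eq:dLdlambda-2}. Together, \eqref{eq:fstar}, \eqref{eq:gstar} and \eqref{eq:dLdlambda-2} are the claimed characterization. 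One can also note that the averaged objective $\Eb U(f(X_T)+g(Y_T)-h(X_T,Y_T))$ is concave in $(f,g)$ --- a concave $U$ precomposed with an affine map, then integrated --- and the constraint is linear, so the first-order system is in fact sufficient as well; consistently with Remark~\ref{Uniq}, the equations \eqref{eq:fstar}--\eqref{eq:dLdlambda-2} are invariant under the shift $f^*\mapsto f^*+k$, $g^*\mapsto g^*-k$, so uniqueness can hold only up to this ambiguity.

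The real content --- and the main obstacle --- is that, in contrast to Theorem~\ref{thm:exp}, \eqref{eq:fstar} and \eqref{eq:gstar} are \emph{coupled}: each of $f^*,g^*$ is expressed through the other, so this theorem yields only an implicit characterization, not a closed form. Two points must be handled to make it meaningful: (i) an integrability hypothesis on $h$ and $g^*$ (resp.\ $f^*$) ensuring the inner integrals $\int\dd y\,p_{X,Y}(x,y)\ee^{\gam(h(x,y)-g^*(y))}$ and its analogue are finite and strictly positive for a.e.\ $x$ (resp.\ $y$), the two-variable counterpart of the integrability remark following Theorem~\ref{thm:exp}; and (ii) the existence of a solution to the coupled system, which is precisely what the ensuing analysis establishes by inserting \eqref{eq:gstar} into \eqref{eq:fstar} to write $f^*$ as a fixed point of a map and showing, via a Lipschitz/contraction estimate for that map, that the fixed point exists, is unique up to the shift in Remark~\ref{Uniq}, and is the limit of the map's iterates. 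For Theorem~\ref{thm:exp2} itself, however, only the necessity encoded in \eqref{eq:fstar}--\eqref{eq:dLdlambda-2} is asserted, and that is exactly what the displayed algebra delivers.
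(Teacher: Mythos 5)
Your proposal is correct and follows essentially the same route as the paper: substitute the exponential form into the first-order conditions \eqref{eq:opt-f}--\eqref{eq:opt-g}, factor out $\ee^{-\gam f^*(x)}$ (resp.\ $\ee^{-\gam g^*(y)}$), and argue the budget constraint binds because $U$ is strictly increasing, so $\lam^*\neq 0$. The paper simply cites the comparison with Theorem~\ref{thm:exp} (where $h$ is replaced by $h-g^*$), whereas you write out the algebra and explicitly note that the extra factor of $\gam$ from $U'(x)=\gam\ee^{-\gam x}$ is absorbed into $\lam^*$ --- a detail the paper glosses over --- but the argument is the same.
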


\begin{proof}
By comparing \eqref{eq:optimality-1} and \eqref{eq:opt-f}, we see that $h(x,y)$ in the former has been replaced by $h(x,y) - g^*(y)$ in the latter.  As such, we find from \eqref{eq:f-star} that $f^*$ and $g^*$ must satisfy \eqref{eq:fstar} and \eqref{eq:gstar}, respectively, where the expression for $g^*$ follows by comparing
\eqref{eq:opt-f} and \eqref{eq:opt-g}.  Next, note that when $U$ is a strictly increasing utility function (as is the case here), it is always optimal for the pair $f^*$ and $g^*$ to satisfy $\Ebt f^*(X_T) + \Ebt g^*(Y_T) = c$.  In this case, we have $\lam^* \neq 0$ and \eqref{eq:lambda-condition-2} becomes \eqref{eq:dLdlambda-2}.
\end{proof}



\subsubsection{Capital allocation}

%

The solution identified by equations \eqref{eq:fstar}-\eqref{eq:dLdlambda-2} is such that the portions of the total budget $c$ allocated to hedging against the risks $X_T$ and $Y_T$ (i.e., $\Ebt f^*(X_T)$ and $\Ebt g^*(Y_T)$) are respectively given by
\begin{align*}
\Ebt f^*(X_T)
& = \frac{-1}{\gam} \log ( - \lam^*) 
	+ \frac{1}{\gam} \int \dd x \, \pt_X(x)  
	  \log \Big( \frac{ 1 }{\pt_X(x)}
	  \int \dd yp_{X,Y}(x,y) 
	  \ee^{\gam (h(x,y)-g^*(y))} \Big), \\
\Ebt g^*(Y_T)
& = \frac{-1}{\gam} \log ( - \lam^*) 
	+ \frac{1}{\gam} \int \dd y \, \pt_Y(y)  
	  \log \Big( \frac{ 1 }{\pt_Y(y)} 
	  \int \dd x \, p_{X,Y}(x,y)
	  \ee^{\gam (h(x,y)-f^*(x))} \Big).
\end{align*}
Furthermore, if we insert \eqref{eq:fstar} and \eqref{eq:gstar} into \eqref{eq:dLdlambda-2}, we find that $\lam^*$ satisfies
\begin{align}\label{eq:lam}
\frac{-1}{\gam} \log ( - \lam^*)
	&=	\frac{c}{2} - \frac{1}{2\gam}  \int \dd x \, \pt_X(x)  \log \Big( \frac{ 1 }{\pt_X(x)} \int \dd y \, p_{X,Y}(x,y) \ee^{\gam (h(x,y)-g^*(y))} \Big) \\ &\quad
			- \frac{1}{2\gam}  \int \dd y \, \pt_Y(y)  \log \Big( \frac{ 1 }{\pt_Y(y)} \int \dd x \, p_{X,Y}(x,y) \ee^{\gam (h(x,y)-f^*(x))} \Big),
\end{align}
and so the solution $(f^*,g^*)$ found in Theorem \ref{thm:exp2} also solves
\begin{align}
\Ebt f^*(X_T)
	&=\frac{c}{2}
		+\frac{1}{2\gamma}\int \dd \xi \, \pt_X(\xi)
			\log\left(\frac{ 1 }{\pt_X(\xi)}
		 	\int \dd y \, p_{X,Y}(\xi,y)\ee^{\gamma (h(\xi,y)
			-g^*(y))}\right)\\
	& \ \ \
		-\frac{1}{2\gamma}\int \dd \xi \, \pt_Y(\xi)
			\log\left(\frac{ 1 }{\pt_Y(\xi)}
		 	\int \dd x \, p_{X,Y}(x,\xi)\ee^{\gamma (h(x,\xi)
			-f^*(x))}\right), \label{eq:Ebtf}\\
\Ebt g^*(Y_T)
	&=\frac{c}{2}
		+\frac{1}{2\gamma}\int \dd \xi \, \pt_Y(\xi)
			\log\left(\frac{ 1 }{\pt_Y(\xi)}
			\int \dd x \, p_{X,Y}(x,\xi)\ee^{\gamma (h(x,\xi)
			-f^*(x))}\right)\\
	& \ \ \
		-\frac{1}{2\gamma}\int \dd \xi \, \pt_X(\xi)
			\log\left(\frac{ 1 }{\pt_X(\xi)}
		 	\int \dd y \, p_{X,Y}(\xi,y)\ee^{\gamma (h(\xi,y)
			-g^*(y))}\right). \label{eq:Ebtg}
\end{align}
Putting together equations \eqref{eq:fstar}, \eqref{eq:gstar}, \eqref{eq:lam}, \eqref{eq:Ebtf} and \eqref{eq:Ebtg}, we obtain
%
\begin{align}
f^*(x) &= c_f
		+\frac{1}{\gamma}\log\left(\frac{ 1 }{\pt_X(x)}
		\int \dd y \, p_{X,Y}(x,y)\ee^{\gamma (h(x,y)
			-g^*(y))}\right)\\
		&\quad 
		-\frac{1}{\gamma}\int \dd \xi \, \pt_X(\xi)
			 \log\left(\frac{ 1 }{\pt_X(\xi)}
		 	\int \dd y \, p_{X,Y}(\xi,y)\ee^{\gamma (h(\xi,y)
			-g^*(y))}\right)\label{eq:fstar1}\\
g^*(y) &= c_g
		+\frac{1}{\gamma}\log\left(\frac{ 1 }{\pt_Y(y)}
		\int \dd x \, p_{X,Y}(x,y)\ee^{\gamma (h(x,y)
			-f^*(x))}\right)\\
		&\quad
		-\frac{1}{\gamma}\int \dd \xi \, \pt_Y(\xi)
			 \log\left(\frac{ 1 }{\pt_Y(\xi)}
		 	\int \dd x \, p_{X,Y}(x,\xi)\ee^{\gamma (h(x,\xi)
			-f^*(x))}\right), \label{eq:gstar1}
\end{align}
where
\begin{align*}
c_f : = c-\Ebt g^*(Y_T), \ c_g : =  c-\Ebt f^*(Y_T).
\end{align*}
On the other hand, simple algebra shows that each solution of the system defined by equations \eqref{eq:fstar1} and\eqref{eq:gstar1} is also a solution to \eqref{eq:fstar} and \eqref{eq:gstar}. This is not a surprise since, as in standard multivariate optimization, if $(f^*,g^*)$ is optimal, $f^*$ must also be optimal for hedging $h(X_T,Y_T)-g^*(Y_T)$ with budget $c-\Ebt[g^*(Y_T)]$, and similarly for $g^*$. In particular, the same line of reasoning as in Theorem \ref{thm:exp} shows that \eqref{eq:fstar1} and \eqref{eq:gstar1} are the first order conditions for the problem
\begin{align}
	&\sup_{f,g \in \Ac'} \Eb U(f(X_T) + g(Y_T) - h(X_T,Y_T)) , &
\Ac'
	&:=	\{ f,g : \Ebt f(X_T) = c_f, \ \Ebt g(Y_T) = c_g, \ c_f+c_g = c \}, \label{eq:problem-3'}
\end{align}
Under some additional conditions (e.g., as specified below, $\Omega$ finite), we will show that $f^*+g^*$ is unique, even though $(f^*,g^*)$ is not unique, and it can be constructed by iterating a certain operator $\mathscr{H}$ (see \eqref{eq:H}). The hedges $f^*$ and $g^*$ will then be constructed from $f^*+g^*$ based on equations \eqref{eq:fstar1} and \eqref{eq:gstar1} (see, in particular, equations \eqref{eq:fstarN} and \eqref{eq:gstarN}).

\subsubsection{Construction of the iterative scheme}
It is easy to see from the formulation \eqref{eq:problem-3'} that, in the case of exponential utility, the function $f^*+g^*$, where $f^*$ and $g^*$ solve \eqref{eq:fstar} and \eqref{eq:gstar}, can be constructed as $\eta^*+c$, where $\eta^*$ solves the problem
\begin{align}
	&\sup_{\eta \in \Ac''} \Eb U(\eta(X_T,Y_T) - h(X_T,Y_T)+c) , &
\Ac''
	&:=	span\{\sigma(X_T),\sigma(Y_T)\}\cap
		\{ \eta : \Ebt \eta(X_T,Y_T) = 0\}. \label{eq:problem-3_entropic}
\end{align}
Consider the operator 
\begin{align*}
\Hc:\sigma(X_T,Y_T)\rightarrow \sigma(X_T,Y_T)
\end{align*}
defined by 
\begin{align}\label{eq:H}
\Hc(\xi):=\Hc^Y(\Hc^X(\xi)), \ \xi\in \sigma(X_T,Y_T),
\end{align}
where
\begin{align}
\Hc^X(\xi) &:=
		 \xi
		-\left(\Ec^X(\xi)-\Ebt\Ec^X\xi\right),
		\label{eq:HX}\\
\Hc^Y(\xi) &:= 
		\xi
		-\left(\Ec^Y(\xi)-\Ebt\Ec^Y\xi\right),
		\label{eq:HY}\\
\Ec^X(\xi) &:=
		\frac{1}{\gamma}\log\left(\frac{ 1 }{\pt_X(x)}
		\int \dd y \, p_{X,Y}(x,y)\ee^{\gamma \xi(x,y)
			}\right) ,\label{eq:epsX}\\
\Ec^Y(\xi) &:= 
		\frac{1}{\gamma}\log\left(\frac{ 1 }{\pt_Y(y)}
		\int \dd x \, p_{X,Y}(x,y)\ee^{\gamma \xi(x,y)
			}\right).\label{eq:epsY}			
\end{align}
The rest of Section \ref{exp2risks} is devoted to showing that the solution $\eta^*$ to problem \eqref{eq:problem-3_entropic} satisfies $\eta^*=h-h^*-c$, where $h^*$ is the unique fixed point of the operator $\mathscr{H}$ on the set $B_{\Eb U(h-c)}$ and where, for $R\geq 0$, the set $B_R$ is defined by
\begin{align}
B_{R} := \{\xi:\Eb U(-\xi)\geq R\}\cap\{\xi:\Ebt \xi=0\}	
		\cap \{\xi:h-\xi-c\in \Ac''\},\label{eq:Domain}
\end{align}
where $\Ac''$ is as in \eqref{eq:problem-3_entropic}. 
\begin{remark}\label{RNextension}
The definition of the operator $\mathscr{H}$ requires one to extend the expectation $\Ebt$ to the $\sigma$-algebra $\sigma(X_T,Y_T)$, in such a way that it corresponds to the expectation with respect to a measure $\Pbt_{X_T,Y_T}$ whose marginals for $X_T$ and $Y_T$ are the same as those under the risk neutral distribution $\Pbt$. This can be done by simply taking the joint density of $(X_T,Y_T)$ under $\Pbt_{X_T,Y_T}$ as the product of the (assumed known) risk neutral marginals.
\end{remark}
\begin{remark} In theory, one also needs $\Ebt[h(X_T,Y_T)]=c$, the budget available to hedge against the risks $X_T$ and $Y_T$. However, because $U$ is the exponential utility, any fund available in excess or in defect of  $\Ebt[h(X_T,Y_T)]$ may simply be added to or subtracted by the solution found through the iterative scheme proposed below. Consequently, we simply assume in what follows that, indeed, $\Ebt[h(X_T,Y_T)]=c$.
\end{remark} 

\begin{proposition} The operator $\Hc$, defined by \eqref{eq:H}, maps the set $B_R$ defined by \eqref{eq:Domain} into itself for any constant $R\geq 0$. Furthermore, it is Lipschitz continuous as a map 
\begin{align*}
\Hc:L^{\infty}(\Omega,\sigma(X_T,Y_T),\Pb)\rightarrow L^{\infty}(\Omega,\sigma(X_T,Y_T),\Pb).
\end{align*}
\end{proposition}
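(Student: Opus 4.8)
The plan is to decompose $\Hc=\Hc^Y\circ\Hc^X$ and treat the two ``one-block'' maps $\Hc^X,\Hc^Y$ of \eqref{eq:HX}, which play symmetric roles under the swap $X\leftrightarrow Y$, so that it suffices to establish the two claims for $\Hc^X$ and then compose. Writing $a(\xi):=\Ebt\Ec^X(\xi)\in\Rb$, I would first record the two structural facts coming from $\Hc^X(\xi)=\xi-\Ec^X(\xi)+a(\xi)$: (i) $\Hc^X(\xi)-\xi$ is $\sigma(X_T)$-measurable up to the additive constant $a(\xi)$; and (ii) $\Ebt\Hc^X(\xi)=\Ebt\xi$, since $a(\xi)$ is chosen precisely to cancel $-\Ebt\Ec^X(\xi)$. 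Throughout one uses the standing normalization $\Ebt h=c$ and the extension of $\Ebt$ to $\sigma(X_T,Y_T)$ of Remark \ref{RNextension}, under which $\Ebt[\phi(X_T)]$ agrees with the risk-neutral marginal for $\phi\in\sigma(X_T)$.

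\emph{Step 1: $\Hc(B_R)\subseteq B_R$.} Fix $\xi\in B_R$ as in \eqref{eq:Domain}; that $\Hc(\xi)\in\sigma(X_T,Y_T)$ is clear. By (i), $h-\Hc^X(\xi)-c=(h-\xi-c)+(\Ec^X(\xi)-a(\xi))$ is again a sum of a $\sigma(X_T)$- and a $\sigma(Y_T)$-measurable function, hence lies in $\mathrm{span}\{\sigma(X_T),\sigma(Y_T)\}$; and by (ii) together with $\Ebt(h-c)=0$ and $\Ebt\xi=0$ its $\Ebt$-expectation is $0$, and likewise $\Ebt\Hc^X(\xi)=0$. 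Applying the symmetric statement for $\Hc^Y$ at the point $\Hc^X(\xi)$ then gives $h-\Hc(\xi)-c\in\Ac''$ and $\Ebt\Hc(\xi)=0$. For the remaining constraint $\Eb U(-\Hc(\xi))\ge R$ it suffices that $\Hc$ not decrease the objective; since $U(x)=-\ee^{-\gam x}$ this amounts to $\Eb\ee^{\gam\Hc(\xi)}\le\Eb\ee^{\gam\xi}$, and by composition it is enough to show $\Eb\ee^{\gam\Hc^X(\zeta)}\le\Eb\ee^{\gam\zeta}$ for every $\zeta$. Substituting $\Hc^X(\zeta)=\zeta-\Ec^X(\zeta)+a(\zeta)$ and using the defining identity $\pt_X(x)\,\ee^{\gam\Ec^X(\zeta)(x)}=\int\dd y\,p_{X,Y}(x,y)\ee^{\gam\zeta(x,y)}$ from \eqref{eq:epsX} to collapse the inner integral, one finds $\Eb\ee^{\gam\Hc^X(\zeta)}=\ee^{\gam a(\zeta)}=\ee^{\gam\Ebt\Ec^X(\zeta)}$; Jensen's inequality for the concave map $t\mapsto\tfrac1\gam\log t$ under the probability measure $\pt_X(x)\,\dd x$ then yields $\Ebt\Ec^X(\zeta)\le\tfrac1\gam\log\Eb\ee^{\gam\zeta}$, i.e.\ $\Eb\ee^{\gam\Hc^X(\zeta)}\le\Eb\ee^{\gam\zeta}$, as needed.

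\emph{Step 2: Lipschitz continuity on $L^\infty$.} Again it suffices to bound $\Hc^X$ and $\Hc^Y$ and compose. First, $\Hc^X$ sends $L^\infty$ into $L^\infty$: for $\xi\in L^\infty$ one has $\Ec^X(\xi)(x)=\tfrac1\gam\log\!\big(p_X(x)/\pt_X(x)\big)+\tfrac1\gam\log\Eb[\ee^{\gam\xi}\,|\,X_T=x]$, whose second term has modulus $\le\|\xi\|_\infty$ and whose first term is bounded under the finiteness hypothesis in force (e.g.\ $\Omega$ finite makes $\log(p_X/\pt_X)$ and $\log(p_Y/\pt_Y)$ bounded), so $\Hc^X(\xi)=\xi-\Ec^X(\xi)+a(\xi)$ is bounded. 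For the estimate, from $\Hc^X(\xi)-\Hc^X(\zeta)=(\xi-\zeta)-(\Ec^X(\xi)-\Ec^X(\zeta))+\Ebt(\Ec^X(\xi)-\Ec^X(\zeta))$ and $|\Ebt(\cdot)|\le\|\cdot\|_\infty$ it is enough to show $\Ec^X$ is non-expansive. If $\|\xi-\zeta\|_\infty=\del$, then $\zeta-\del\le\xi\le\zeta+\del$ a.s., hence the ratio of $\int\dd y\,p_{X,Y}(x,y)\ee^{\gam\xi(x,y)}$ to $\int\dd y\,p_{X,Y}(x,y)\ee^{\gam\zeta(x,y)}$ lies in $[\ee^{-\gam\del},\ee^{\gam\del}]$ for every $x$ (the common factor $1/\pt_X(x)$ cancels), and taking $\tfrac1\gam\log$ gives $\|\Ec^X(\xi)-\Ec^X(\zeta)\|_\infty\le\del$. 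Therefore $\|\Hc^X(\xi)-\Hc^X(\zeta)\|_\infty\le 3\|\xi-\zeta\|_\infty$, and $\Hc=\Hc^Y\circ\Hc^X$ is $9$-Lipschitz, the precise constant being immaterial for the contraction argument to follow.

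\emph{Main obstacle.} The one genuinely substantive point is the monotonicity in Step 1 — the exact identity $\Eb\ee^{\gam\Hc^X(\zeta)}=\ee^{\gam\Ebt\Ec^X(\zeta)}$ together with the Jensen step; the Tonelli interchange there is legitimate because $\ee^{\gam\zeta}$ is bounded above and below for $\zeta\in L^\infty$ and all measures involved are probability measures. The only other point requiring care is the verification that $\Hc$ genuinely lands in $L^\infty$ (rather than merely being Lipschitz on differences), which is where one invokes boundedness of $\log(p_X/\pt_X)$ and $\log(p_Y/\pt_Y)$ — automatic when $\Omega$ is finite, as assumed in the sequel; everything else is routine bookkeeping with the affine structure of $\Hc^X$ and $\Hc^Y$.
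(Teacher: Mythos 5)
Your proof is correct, and its skeleton matches the paper's: decompose $\Hc=\Hc^Y\circ\Hc^X$, check the three defining conditions of $B_R$ for each factor, and get the Lipschitz bound (with the same constant $9$) from the non-expansiveness of the entropic operator plus the triangle inequality and $|\Ebt(\cdot)|\le\|\cdot\|_\infty$. The one place you genuinely diverge is the monotonicity step $\Eb U(-\Hc^X(\xi))\ge\Eb U(-\xi)$, which is the substantive heart of the invariance claim. The paper disposes of it in one line by invoking Theorem \ref{thm:exp}: $\Hc^X(\xi)$ is, by construction, the residual obtained when the $\sigma(X_T)$-component of the hedge is replaced by the optimal one given the current $\sigma(Y_T)$-component, so optimality of that formula immediately gives the inequality. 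You instead prove it from scratch via the exact identity $\Eb\,\ee^{\gam\Hc^X(\zeta)}=\ee^{\gam\Ebt\Ec^X(\zeta)}$ (the cross terms collapse because $\pt_X(x)\ee^{\gam\Ec^X(\zeta)(x)}$ is precisely the inner integral) followed by Jensen under $\pt_X(x)\,\dd x$. This is more work but buys a self-contained argument that does not lean on the earlier optimization theorem, and as a by-product it exhibits the post-iteration expected utility in closed form. You also spell out the non-expansiveness of $\Ec^X$ (which the paper just cites as monotonicity and cash invariance of the entropic risk measure) and add the check, absent from the paper, that $\Hc$ actually lands in $L^\infty$, correctly flagging that this needs boundedness of $\log(p_X/\pt_X)$ and $\log(p_Y/\pt_Y)$ — automatic under the finite-state Assumption \ref{assumption2}, though strictly speaking that assumption is only imposed after this proposition in the paper. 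No gaps.
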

\begin{proof}
Fix $R\geq 0$. Clearly, $\Ebt \Hc^X(\xi)=0$ if $\xi\in B_R$. Suppose $h-\xi-c=f+g$ for some $f\in\sigma(X_T)$ and $g\in \sigma(Y_T)$ that satisfy 
\begin{align*}
\Ebt f(X_T)+\Ebt g(Y_T)=0.
\end{align*}
Then,
\begin{align*}
\Hc^X(\xi)
=h-c-f-g-\left(\Ec^X(h-c-g-f)-\Ebt \Ec^X(h-c-g-f)\right),
\end{align*}
and $\Hc^X(\xi)-h+c\in \Ac''$. If $\xi\in B_{R}$, Theorem \ref{thm:exp} implies
\begin{align*}
R\leq \Eb U(-\xi)\leq \Eb U(-\Hc^X(\xi)),
\end{align*}
and so $\Hc^X(\xi)\in B_{R}$. Applying the same argument to $\Hc^Y(\Hc^X(\xi))$ then yields $\Hc(\xi)\in B_R$. As for the Lipschitz continuity, suppose that $\xi_1,\xi_2\in L^{\infty}(\Omega,\sigma(X_T,Y_T),\Pb)$. Then, the triangle inequality, monotonicity and cash invariance of the entropic risk measure (see \cite{FollmerSchied}), and the absolute value inequality for expectation give
\begin{align*}
\|\Hc(\xi_1)-\Hc(\xi_2)\|_{\infty}
&\leq 
	\|\Hc^X(\xi_1)-\Hc^X(\xi_2)\|_{\infty}\\
& \ \ \
	+\|\Ec^Y(\Hc^X(\xi_1))-\Ebt\Ec^Y(\Hc^X(\xi_1)
		-\Ec^Y(\Hc^X(\xi_2))+\Ebt\Ec^Y(\Hc^X(\xi_2)\|_{\infty}
		\\
&\leq
	\|\Hc^X(\xi_1)-\Hc^X(\xi_2)\|_{\infty}\\
& \ \ \
	+\|\Hc^X(\xi_1)-\Hc^X(\xi_2)\|_{\infty}
	+\Ebt|\Ec^Y(\Hc^X(\xi_1))-\Ec^Y(\Hc^X(\xi_2))|\\
&\leq
	3\|\Hc^X(\xi_1)-\Hc^X(\xi_2)\|_{\infty}\\
&\leq
	9\|\xi_1-\xi_2\|_{\infty}.
\end{align*}
\end{proof}
\begin{proposition} The set $B_{R}$ is closed and convex for every $R\geq 0$.
\end{proposition}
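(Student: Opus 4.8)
The plan is to realize $B_R$ as an intersection of three sets and verify that each is closed and convex, since both properties pass to arbitrary intersections. Write $B_R = S_1 \cap S_2 \cap S_3$, where $S_1 := \{\xi : \Eb U(-\xi) \geq R\}$, $S_2 := \{\xi : \Ebt \xi = 0\}$, and $S_3 := \{\xi : h - \xi - c \in \Ac''\}$, all regarded as subsets of $L^\infty(\Omega,\sigma(X_T,Y_T),\Pb)$, and recall from \eqref{eq:problem-3_entropic} that $\Ac'' = span\{\sigma(X_T),\sigma(Y_T)\} \cap \{\eta : \Ebt\eta = 0\}$ is a linear subspace. It is worth noting in passing that under the standing normalization $\Ebt h(X_T,Y_T) = c$ one actually has $S_3 \subseteq S_2$, so that $B_R = S_1 \cap S_3$; but this observation is not needed.

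For convexity: since $U(x) = -\ee^{-\gam x}$ is concave, the map $\xi \mapsto U(-\xi) = -\ee^{\gam \xi}$ is the composition of a concave function with an affine one and is therefore concave; integrating against $\Pb$ preserves concavity, so $\xi \mapsto \Eb U(-\xi)$ is concave and $S_1$, being one of its superlevel sets, is convex. The set $S_2$ is the kernel of the linear functional $\Ebt$, hence a linear subspace, and $S_3 = (h-c) + \Ac''$ (using $-\Ac'' = \Ac''$) is an affine subspace; both are convex. Intersecting, $B_R$ is convex.

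For closedness: $S_1$ is closed because $\xi \mapsto \Eb U(-\xi)$ is continuous on $L^\infty$ — if $\xi_n \to \xi$ in $\|\cdot\|_\infty$ then $\sup_n \|\xi_n\|_\infty < \infty$, so $\ee^{\gam\xi_n}$ is uniformly bounded and, since $\xi_n \to \xi$ $\Pb$-a.s., dominated convergence gives $\Eb\ee^{\gam\xi_n} \to \Eb\ee^{\gam\xi}$ — so $S_1$ is the preimage of the closed half-line $[R,\infty)$ under a continuous map. The set $S_2$ is the kernel of the bounded linear functional $\Ebt$ on $L^\infty(\Omega,\sigma(X_T,Y_T),\Pb)$ (note $\Pb \sim \Pbt$ share the same $L^\infty$ space), hence closed. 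Finally, $S_3$ is a translate of $\Ac''$, so it is closed as soon as $\Ac''$ is a closed subspace; since $\{\eta : \Ebt\eta = 0\}$ is already closed, this reduces to the closedness of $span\{\sigma(X_T),\sigma(Y_T)\} = \{f(X_T) + g(Y_T)\}$ in $L^\infty$.

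That last reduction is the only delicate point, and the one I expect to be the main obstacle: the algebraic sum of the two subspaces $\{f(X_T)\}$ and $\{g(Y_T)\}$ need not be norm-closed in an infinite-dimensional $L^\infty$, which is precisely why "additional conditions" were flagged before the proposition. Under the paper's simplifying assumption that $\Omega$ is finite, $L^\infty(\Omega,\sigma(X_T,Y_T),\Pb)$ is finite-dimensional, every linear subspace is automatically closed, $\Ebt$ is automatically bounded, and the whole argument goes through verbatim; I would make this reduction to the finite-$\Omega$ setting explicit at the outset of the proof. (Alternatively, one could impose a positive-angle condition between $\sigma(X_T)$ and $\sigma(Y_T)$ — e.g.\ maximal correlation strictly below one — to keep the sum closed in $L^2$ and then transfer the conclusion; but given the scope of the surrounding convergence results the finite-$\Omega$ justification is the natural one.)
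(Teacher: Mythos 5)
Your proof takes the same route as the paper's, which disposes of the proposition in one line by observing that $B_R$ is an intersection of closed and convex sets; your version simply supplies the details that the paper leaves implicit. The one point you flag --- that the norm-closedness of $span\{\sigma(X_T),\sigma(Y_T)\}$ in $L^\infty$ is not automatic in infinite dimensions --- is a genuine subtlety the paper's one-line argument glosses over (the proposition is stated before the finite-state-space Assumption is formally introduced), so making the reduction to finite $\Omega$ explicit, as you propose, is warranted and improves on the original.
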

\begin{proof}
This follows immediately on noting that, for every $R\geq 0$, $B_{R}$ is the intersection of spaces that are closed and convex. 
\end{proof}

\subsubsection{Finite state space and a fixed point scheme}

At this point, additional assumptions on the geometry of the unit ball are needed. For simplicity, it is assumed for the rest of Section \ref{exp2risks} that there are only finitely many states of the world, each occurring with positive probability:

\begin{assumption}\label{assumption2}
The cardinality of the set $\Omega$ is $|\Omega|=N\in \Nb$ and $\Pb(\omega)>0 \ \forall\omega\in \Omega$. 
\end{assumption}

\noindent
See Remark  \ref{rmk:reflexive} for potential extensions of Assumption \ref{assumption2}. We also recall the definition of strict concavity, and a well-known fundamental result concerning the uniqueness of solutions to convex maximization problems with strictly concave targets.

\begin{definition}
A function $u:\Rb^n\rightarrow \Rb$. $n\in \Nb$, is said to be strictly concave if 
\begin{align*}
f(\tau x+(1-\tau)y)>\tau f(x)+(1-\tau)f(y)
\end{align*}
for every $x,y\in \Rb^n$ and $\tau\in (0,1)$.
\end{definition}

\begin{theorem}
Suppose that $B\subset \Rb^n$ is convex, and let $u$ be a real-valued and strictly concave function on $B$. Then, $u$ admits at most one maximum on $B$.
\end{theorem}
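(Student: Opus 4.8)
This is the classical uniqueness statement for maxima of strictly concave functions on a convex set, so the plan is a short contradiction argument. First I would suppose, for contradiction, that $u$ attains its maximum value $m$ at two distinct points $x, y \in B$, so that $u(x) = u(y) = m$ and $u(z) \le m$ for all $z \in B$. Since $B$ is convex, the midpoint $z := \tfrac{1}{2}x + \tfrac{1}{2}y$ lies in $B$. Applying the definition of strict concavity with $\tau = \tfrac{1}{2}$ gives
\begin{align*}
u(z) = u\big(\tfrac{1}{2}x + \tfrac{1}{2}y\big) > \tfrac{1}{2}u(x) + \tfrac{1}{2}u(y) = \tfrac{1}{2}m + \tfrac{1}{2}m = m.
\end{align*}
This contradicts the fact that $m$ is the maximum value of $u$ on $B$, since $z \in B$ but $u(z) > m$. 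Hence $u$ cannot attain its maximum at two distinct points, i.e., $u$ admits at most one maximum on $B$.

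There is essentially no obstacle here: the argument is a one-line invocation of strict concavity at the midpoint. The only points requiring a word of care are that the midpoint indeed belongs to $B$ (which is exactly convexity of $B$) and that $\tfrac{1}{2} \in (0,1)$ so the strict inequality in the definition applies (it does). I would also note explicitly that the statement is vacuously fine if $u$ has no maximum at all on $B$ — the claim is "at most one," not "exactly one" — so no existence assumption (such as compactness of $B$ or continuity of $u$) is needed.

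If one wanted to phrase it without passing through the maximum value, the same idea works directly: if $x \ne y$ both maximize $u$, then $u(x) \ge u(z)$ and $u(y) \ge u(z)$ for the midpoint $z$, while strict concavity forces $u(z) > \tfrac{1}{2}u(x) + \tfrac{1}{2}u(y) \ge \min\{u(x), u(y)\}$ — but since $u(x) = u(y)$, this lower bound equals $u(x)$, and combined with $u(x) \ge u(z)$ we again reach $u(z) > u(z)$, a contradiction. Either formulation is a couple of lines; I would present the first one as it is the cleanest.
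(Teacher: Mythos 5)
Your proof is correct and is essentially the same as the paper's: both derive a contradiction by applying strict concavity to a convex combination of two supposed distinct maximizers (the paper uses a general $\tau \in (0,1)$, you specialize to the midpoint, which changes nothing). No gaps.
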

\begin{proof}
Suppose that $u$ admits two distinct maxima on $B$, denoted by $x^*$ and $y^*$. Since $B$ is convex, for every $\tau\in (0,1)$, $\tau x^*+(1-\tau)y^*\in B$. But, since $u$ is strictly concave,
\begin{align*}
u(\tau x^*+(1-\tau)y^*)>\tau u(x^*)+(1-\tau)u(y^*) = u(x^*),
\end{align*}
which contradicts the maximality of $x^*$.
\end{proof}

\noindent
Under Assumption \ref{assumption2}, problem \eqref{eq:problem-3_entropic} is also finite dimensional. In particular, given $X_T$ and $Y_T$, the sets $\Ac''$ and $B_R$ defined in \eqref{eq:problem-3_entropic} and \eqref{eq:Domain} for $R\geq 0$ can be identified with subsets of $\Rb^N$. Then, the map
\begin{align}
\mathscr{U}:\eta\rightarrow \Eb U(\eta(X_T,Y_T)-h(X_T,Y_T)+c),\label{eq:EU}
\end{align}
is a real valued function on $\mathscr{A}''\subset \Rb^N$. It is strictly concave and continuous, since $U$ is strictly concave and continuous and the expectation is a linear operator. Next, we show some useful lemmas.

\begin{lemma} The set $B_{R}$ is bounded.
\end{lemma}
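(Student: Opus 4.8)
The plan is to use the finiteness of $\Om$ (Assumption \ref{assumption2}) to reduce the statement to a coordinatewise bound on the vector $(\xi(\om))_{\om\in\Om}\in\Rb^N$ representing $\xi$: I would get an upper bound from the exponential-moment constraint in the definition of $B_R$ and then a matching lower bound from the linear constraint $\Ebt\xi=0$. As a preliminary, note that $\Pb\sim\Pbt$ together with $\Pb(\om)>0$ for all $\om$ forces $\Pbt(\om)>0$ for all $\om$; set $p_{\min}:=\min_{\om}\Pb(\om)>0$ and $\pt_{\min}:=\min_{\om}\Pbt(\om)>0$. Since all norms on $\Rb^N$ are equivalent, it is enough to bound $|\xi(\om)|$ uniformly in $\xi\in B_R$ and $\om\in\Om$.

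First I would handle the upper bound. Here $U(x)=-\ee^{-\gam x}$, so the first condition in \eqref{eq:Domain} reads $\sum_{\om}\Pb(\om)\ee^{\gam\xi(\om)}\le -R$. If $R\ge 0$ the left-hand side being strictly positive makes $B_R$ empty and there is nothing to prove, so I may assume $R<0$. Keeping a single summand gives $\Pb(\om)\ee^{\gam\xi(\om)}\le -R$ for each $\om$, hence $\xi(\om)\le\tfrac{1}{\gam}\log\!\big(\tfrac{-R}{\Pb(\om)}\big)\le\tfrac{1}{\gam}\log\!\big(\tfrac{-R}{p_{\min}}\big)=:b$ for every $\om$.

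Next I would convert this into a lower bound using $\Ebt\xi=\sum_{\om}\Pbt(\om)\xi(\om)=0$: for any fixed $\om_0$, $\Pbt(\om_0)\xi(\om_0)=-\sum_{\om\ne\om_0}\Pbt(\om)\xi(\om)\ge-\sum_{\om\ne\om_0}\Pbt(\om)\,b\ge-|b|$, using $\xi(\om)\le b$ and $\sum_{\om\ne\om_0}\Pbt(\om)<1$. Thus $\xi(\om_0)\ge-|b|/\pt_{\min}=:a$, so $B_R\subseteq\prod_{\om\in\Om}[a,b]$, a bounded box in $\Rb^N$, which is the claim. (The third defining constraint $h-\xi-c\in\Ac''$ plays no role here.)

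This is elementary, so there is no real obstacle; the only things to watch are the degenerate case $R\ge 0$ (where $B_R=\emptyset$) and the observation that the exponential-moment bound only controls $\xi$ from above, so one genuinely needs the equality constraint $\Ebt\xi=0$ to produce a lower bound. The substantive use of this lemma comes later: combined with the closedness and convexity of $B_R$ already established, and with the strict concavity of the map $\mathscr{U}$ in \eqref{eq:EU}, boundedness supplies the compactness needed to argue existence and uniqueness for problem \eqref{eq:problem-3_entropic}.
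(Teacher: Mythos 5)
Your proof is correct, and it takes a genuinely different route from the paper's. The paper argues by contradiction via a recession-direction argument: assuming $B_R$ is unbounded, it normalizes an unbounded sequence, extracts a limiting direction $\xi$ with $\xi(\hat\om)>0$ (using $\Ebt\xi_n=0$ to rule out directions that are nonpositive everywhere), and shows that moving along the ray $\lam\xi+\xi_1$ drives $\Eb U$ to $-\infty$, violating the utility constraint; this argument leans on the closedness and convexity of $B_R$ established in the preceding proposition. You instead produce explicit coordinatewise bounds: the constraint $\Eb\ee^{\gam\xi}\le -R$ gives a uniform upper bound $b$ on each $\xi(\om)$ because every summand $\Pb(\om)\ee^{\gam\xi(\om)}$ is positive and $\Pb(\om)\ge p_{\min}>0$, and the linear constraint $\Ebt\xi=0$ then converts this into a uniform lower bound. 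Your version is more elementary (no compactness of the unit sphere, no prior closedness/convexity needed), yields quantitative bounds, and correctly isolates the two constraints actually doing the work. It also surfaces a point the paper glosses over: since $U(x)=-\ee^{-\gam x}<0$, the set $B_R$ is empty whenever $R\ge 0$, so the lemma is vacuous in the regime $R\ge 0$ stated in \eqref{eq:Domain}; the substantive case is $R=\Eb U(h-c)<0$, which is the one used in Theorem \ref{thm:expscheme}, and your proof handles exactly that case. The one caveat common to both proofs is the identification of $\Ebt$ on $\sig(X_T,Y_T)$ with a strictly positive weighting of the states in $\Om$ (cf.\ Remark \ref{RNextension}); you make the same implicit assumption the paper does, so this is not a gap relative to the paper's own argument.
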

\begin{proof}
Suppose $B_{R}$ is unbounded. Then, there is $\{\xi_n\}_{n\in\Nb}\subset B_{R}$ such that $\|\xi_n\|_{\infty}\rightarrow \infty$. Since $B_{R}$ is convex, 
\begin{align*}
L_n:=\left\lbrace
	\frac{\lambda}{\|\xi_n\|_\infty} \xi_n
	+\left(1-\frac{\lambda}{\|\xi_n\|_\infty}\right)\xi_1
	\right\rbrace_{\lambda\in [0,\|\xi_n\|_{\infty}]}\subset B_{R}
\end{align*}
for every $n\in \Nb$. Since the sequence $\{\xi_n/\|\xi_n\|_{\infty}\}_{n\in \Nb}$ lies on a compact set, it admits a convergent subsequence $\{\xi_{n_k}/\|\xi_{n_k}\|_{\infty}\}_{k\in \Nb}$. Let $\xi$ be its limit point. Since $\|\xi_n\|_{\infty}\rightarrow \infty$, for every $\lambda \geq 0$ there is $\hat{k}\in\Nb$ such that for every $k\in \Nb$, $k\geq \hat{k}$,
\begin{align*}
\frac{\lambda}{\|\xi_{n_k}\|_\infty} \xi_{n_k}
	+\left(1-\frac{\lambda}{\|\xi_{n_k}\|_\infty}\right)\xi_1 \in B_{R}.
\end{align*}
Letting $k\rightarrow \infty$ and since $B_R$ is closed, it must be the case that $\lambda\xi +\xi_1\in B_R$. Hence, $\{\lambda \xi+\xi_1\}_{\lambda\geq 0}\subset B_{R}$. Since $\Ebt\xi_n=0$, it must be the case that 
\begin{align*}
\lim_{n\rightarrow \infty} \sup_{\omega \in \Omega} \xi_n(\omega) = +\infty,
\end{align*}
and so there is $\hat{\omega}\in \Omega$ such that $\xi(\hat{\omega})>0$. But then
\begin{align*}
\lim_{\lambda\rightarrow \infty} \Eb U(-\lambda \xi-\xi_1)
=-\lim_{\lambda\rightarrow \infty}\sum_{\omega\in \Omega}p(\omega)\ee^{\gamma (\lambda\xi(\omega)+\xi_1(\omega))}=-\infty<R,
\end{align*}
which implies the existence of $\lambda>0$ such that $\lambda \xi+\xi_1\notin B_{R}$, a contradiction.
\end{proof}

\begin{lemma}\label{thm:condstar} Fix $h\in \Rb^N$. Then, $\eta^*$ solves problem \eqref{eq:problem-3_entropic} if and only if 
\begin{align}
\Ec^X(h-\eta^*-c)-\Ebt \Ec^X(h-\eta^*-c)&=0,
	\label{eq:etastar1}\\
\Ec^Y(h-\eta^*-c)-\Ebt \Ec^Y(h-\eta^*-c)&=0
	\label{eq:etastar2}.
\end{align}
\end{lemma}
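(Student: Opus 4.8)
The plan is to recognize \eqref{eq:problem-3_entropic}, under Assumption \ref{assumption2}, as nothing more than the maximization of the smooth concave function $\mathscr{U}$ of \eqref{eq:EU} over a \emph{linear subspace} of $\Rb^{N}$: for exponential utility $U(x)=-\ee^{-\gamma x}$ the map $\mathscr{U}$ is finite, $C^{1}$ and concave on all of $\Rb^{N}$, and $\Ac''$ is a linear subspace, being the intersection of $\mathrm{span}\{\sigma(X_T),\sigma(Y_T)\}$ with the hyperplane $\{\eta:\Ebt\eta=0\}$. For such a problem a feasible $\eta^*$ is a maximizer if and only if $\frac{\dd}{\dd\eps}\mathscr{U}(\eta^*+\eps\delta)\big|_{\eps=0}=0$ for every $\delta\in\Ac''$: necessity holds because $\eta^*+\eps\delta\in\Ac''$ for all $\eps\in\Rb$, so $\eps\mapsto\mathscr{U}(\eta^*+\eps\delta)$ has an interior maximum at $\eps=0$; sufficiency holds because, for any $\eta\in\Ac''$, the concave function $\eps\mapsto\mathscr{U}(\eta^*+\eps(\eta-\eta^*))$ has vanishing derivative at $\eps=0$ and so is maximized there, giving $\mathscr{U}(\eta)\le\mathscr{U}(\eta^*)$. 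Since $U'(x)=\gamma\ee^{-\gamma x}$ and the expectation is a finite sum, this first-order condition reads
\begin{align*}
\Eb\big[\,\ee^{\gamma(h-\eta^*-c)}\,\delta\,\big]=0\qquad\text{for all }\delta\in\Ac''.
\end{align*}

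Writing $M:=\ee^{\gamma(h(X_T,Y_T)-\eta^*(X_T,Y_T)-c)}>0$, the next step is to show this is equivalent to the pair of marginal conditions $\Eb[M\,f(X_T)]=0$ for every $f$ with $\Ebt f(X_T)=0$, and $\Eb[M\,g(Y_T)]=0$ for every $g$ with $\Ebt g(Y_T)=0$. One direction follows by testing against $\delta=f(X_T)$ and $\delta=g(Y_T)$, which belong to $\Ac''$ precisely when $\Ebt f(X_T)=0$ and $\Ebt g(Y_T)=0$ respectively. For the converse, decompose a general $\delta=f(X_T)+g(Y_T)\in\Ac''$ as $\delta=f_0(X_T)+g_0(Y_T)$ with $f_0:=f(X_T)-\Ebt f(X_T)$ and $g_0:=g(Y_T)-\Ebt g(Y_T)$, using that the constants $\Ebt f(X_T)$ and $\Ebt g(Y_T)$ sum to zero; then $\Eb[M\delta]=\Eb[Mf_0(X_T)]+\Eb[Mg_0(Y_T)]=0$.

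The last step is to dualize each marginal condition. The maps $f\mapsto\Eb[Mf(X_T)]=\int\dd x\,p_X(x)\,\Eb[M\mid X_T=x]\,f(x)$ and $f\mapsto\Ebt f(X_T)=\int\dd x\,\pt_X(x)\,f(x)$ are linear functionals on the finite-dimensional space of $\sigma(X_T)$-measurable functions, the second being nonzero; hence the first annihilates the kernel of the second if and only if $p_X(x)\,\Eb[M\mid X_T=x]=\kappa\,\pt_X(x)$ for a constant $\kappa$ (necessarily $\kappa=\Eb[M]>0$). By the definition \eqref{eq:epsX}, $\Eb[M\mid X_T=x]=\tfrac1{p_X(x)}\int\dd y\,p_{X,Y}(x,y)\,\ee^{\gamma(h(x,y)-\eta^*(x,y)-c)}=\tfrac{\pt_X(x)}{p_X(x)}\,\ee^{\gamma\,\Ec^X(h-\eta^*-c)(x)}$, so the displayed relation becomes $\Ec^X(h-\eta^*-c)(x)=\tfrac1\gamma\log\kappa$, i.e.\ $\Ec^X(h-\eta^*-c)$ is constant, which is precisely $\Ec^X(h-\eta^*-c)-\Ebt\Ec^X(h-\eta^*-c)=0$, namely \eqref{eq:etastar1}; the mirror computation using \eqref{eq:epsY} gives \eqref{eq:etastar2}. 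Every link in this chain is an equivalence, so reading it backwards shows that \eqref{eq:etastar1}--\eqref{eq:etastar2} force the first-order condition and hence the optimality of $\eta^*$.

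I expect the only genuinely delicate point to be the first one — that stationarity on $\Ac''$ is truly equivalent to optimality. This is where Assumption \ref{assumption2} matters (finiteness of $\Omega$ makes everything a finite-dimensional smooth concave program, so no constraint qualification or growth condition is needed) together with the fact that exponential utility has full domain $\Rb$ (so $\mathscr{U}$ is globally $C^{1}$, with no boundary of its effective domain to worry about). After that, the reduction to the two marginal conditions is routine bookkeeping, and the dualization is the elementary fact that a linear functional vanishing on a hyperplane is proportional to the functional defining it, followed by unwinding the definitions \eqref{eq:epsX}--\eqref{eq:epsY} of $\Ec^X$ and $\Ec^Y$.
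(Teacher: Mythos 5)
Your proof is correct, and it takes a genuinely different route from the paper's. The paper proves this lemma by falling back on the $(f,g)$-formulation: it invokes the Karush--Kuhn--Tucker theorem to assert that \eqref{eq:fstar1} and \eqref{eq:gstar1} are necessary and sufficient for optimality of a decomposition $\eta^*=f^*+g^*$ (up to the additive constant $c$), and then observes that, after unwinding $\Ec^X(h-\eta^*-c)=\Ec^X(h-g^*)-f^*-c+\text{const}$, those two first-order conditions are exactly \eqref{eq:etastar1}--\eqref{eq:etastar2}. You instead work directly with the $\eta$-problem: stationarity of the smooth concave map $\mathscr{U}$ on the linear subspace $\Ac''$ is equivalent to optimality, the stationarity condition $\Eb[M\delta]=0$ splits into the two marginal annihilation conditions, and each of these dualizes to the statement that $\Ec^X(h-\eta^*-c)$ (resp.\ $\Ec^Y$) is constant, which is precisely \eqref{eq:etastar1} (resp.\ \eqref{eq:etastar2}). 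Your approach buys two things: it is self-contained (no appeal to KKT, whose applicability to the equality-constrained reformulation \eqref{eq:problem-3'} the paper does not really justify), and it sidesteps the non-uniqueness of the decomposition $\eta^*=f^*+g^*$ noted in Remark \ref{Uniq} — the paper's argument requires choosing a decomposition and matching it to the specific constants $c_f,c_g$ appearing in \eqref{eq:fstar1}--\eqref{eq:gstar1}, a point it glosses over. The paper's route is shorter on the page because it reuses the capital-allocation computations already displayed. One cosmetic remark: since you are working under Assumption \ref{assumption2}, the densities $p_X,\pt_X$ in your dualization step are really probability mass functions on the finite range of $X_T$; the equivalence $\Pb\sim\Pbt$ from Assumption \ref{assume1} is what guarantees the coefficient vector $\bigl(\pt_X(x)\bigr)_x$ is nowhere zero where $p_X$ is positive, so the annihilator argument goes through.
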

\begin{proof}
Note that, since problem \eqref{eq:problem-3} is a convex optimization problem when $U(x)=-\ee^{-\gamma x}$, the Karush-Kuhn-Tucker Theorem implies that conditions \eqref{eq:fstar1} and \eqref{eq:gstar1} are necessary and sufficient for the optimality of vectors $f^*$ and $g^*$. Suppose that $\eta^*$ is optimal for problem \eqref{eq:problem-3_entropic}. Let $f^*$ and $g^*$ be such that $\eta^*=f^*+g^*$ and $f^*\in \sigma(X_T)$, $g^*\in \sigma(Y_T)$. Then, $f^*$ and $g^*$ satisfy  \eqref{eq:fstar1} and \eqref{eq:gstar1}, which imply \eqref{eq:etastar1} and \eqref{eq:etastar2}. On the other hand, suppose that \eqref{eq:etastar1} and \eqref{eq:etastar2} hold. Again, let $f^*$ and $g^*$ be such that $\eta^*=f^*+g^*$ and $f^*\in \sigma(X_T)$, $g^*\in \sigma(Y_T)$. Then, $f^*$ and $g^*$ satisfy  \eqref{eq:fstar1} and \eqref{eq:gstar1}, which imply that $f^*$ and $g^*$ are optimal for problem \eqref{eq:problem-3}. Hence, $\eta^*$ is optimal for problem \eqref{eq:problem-3_entropic}.
\end{proof}
\begin{lemma}\label{thm:FixedPoint} Fix $h\in \Rb^N$. Then, $h^*\in B_{\Eb U(h-c)}$ is a fixed point of $\Hc$ if and only if 
\begin{align*}
\eta^* := h-h^*-c
\end{align*}
solves problem \eqref{eq:problem-3_entropic}.
\end{lemma}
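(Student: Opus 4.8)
The plan is to route the whole argument through Lemma~\ref{thm:condstar}. Setting $\eta^*:=h-h^*-c$ is the same as setting $h-\eta^*-c=h^*$, so Lemma~\ref{thm:condstar} already identifies ``$\eta^*$ solves \eqref{eq:problem-3_entropic}'' with the two centering identities
\[
\Ec^X(h^*)-\Ebt\,\Ec^X(h^*)=0
\qquad\text{and}\qquad
\Ec^Y(h^*)-\Ebt\,\Ec^Y(h^*)=0 .
\]
Hence what I must show is precisely that $h^*$ is a fixed point of $\Hc=\Hc^Y\circ\Hc^X$ (see \eqref{eq:H}) if and only if both identities hold, together with the bookkeeping that any such $h^*$ lies in $B_{\Eb U(h-c)}$.

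In one direction, suppose $\eta^*=h-h^*-c$ solves \eqref{eq:problem-3_entropic}. By Lemma~\ref{thm:condstar} the two centering identities hold, and reading them against \eqref{eq:HX}--\eqref{eq:HY} they say exactly $\Hc^X(h^*)=h^*$ and $\Hc^Y(h^*)=h^*$, so $\Hc(h^*)=\Hc^Y(\Hc^X(h^*))=h^*$. I would then verify $h^*\in B_{\Eb U(h-c)}$ clause by clause from \eqref{eq:Domain}: the requirement $h-h^*-c=\eta^*\in\Ac''$ is feasibility of $\eta^*$; $\Ebt h^*=\Ebt h-\Ebt\eta^*-c=0$ uses $\Ebt\eta^*=0$ and the standing normalization $\Ebt h=c$; and $\Eb U(-h^*)$ equals the optimal value of \eqref{eq:problem-3_entropic}, which dominates the value of the feasible ``no-hedge'' point $\eta\equiv0\in\Ac''$, yielding the last clause.

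For the converse, suppose $h^*\in B_{\Eb U(h-c)}$ with $\Hc(h^*)=h^*$. Put $a:=\Ec^X(h^*)-\Ebt\,\Ec^X(h^*)$, which is $\sigma(X_T)$-measurable, so \eqref{eq:HX} gives $\Hc^X(h^*)=h^*-a$; then put $b:=\Ec^Y(h^*-a)-\Ebt\,\Ec^Y(h^*-a)$, which is $\sigma(Y_T)$-measurable, so \eqref{eq:HY} gives $\Hc(h^*)=\Hc^Y(h^*-a)=h^*-a-b$. The fixed-point equation then forces $a+b=0$, so $a=-b$ is $\sigma(Y_T)$-measurable; but $a$ is also $\sigma(X_T)$-measurable, and since $X_T\ind Y_T$ under the extended pricing measure $\Pbt_{X_T,Y_T}$ of Remark~\ref{RNextension}, $a$ is independent of $\sigma(X_T)$, whence $a=\Ebt[a\mid X_T]=\Ebt a=0$ (which holds $\Pbt$-a.s., hence everywhere on $\Omega$ by Assumption~\ref{assumption2}), and consequently $b=-a=0$. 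The vanishing of $a$ is the first centering identity; and since $a\equiv0$ we get $b=\Ec^Y(h^*)-\Ebt\,\Ec^Y(h^*)=0$, the second. As $h^*\in B_{\Eb U(h-c)}$ also supplies $\eta^*=h-h^*-c\in\Ac''$ (feasibility), Lemma~\ref{thm:condstar} applies and shows that $\eta^*$ solves \eqref{eq:problem-3_entropic}.

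I expect the one genuinely delicate step to be the implication $a+b=0\Rightarrow a\equiv b\equiv0$: this is where the structural hypotheses enter, namely that $X_T$ and $Y_T$ are independent under the extended measure $\Pbt_{X_T,Y_T}$ (equivalently, that a function measurable with respect to both $\sigma(X_T)$ and $\sigma(Y_T)$ is $\Pbt_{X_T,Y_T}$-a.s.\ constant), together with Assumption~\ref{assumption2} to upgrade $\Pbt$-a.s.\ equalities to pointwise equalities on $\Omega$. Everything else is a direct unwinding of the definitions \eqref{eq:HX}--\eqref{eq:epsY} and an appeal to the already-established Lemma~\ref{thm:condstar}; the only other point needing a little care is the clause-by-clause verification of membership in $B_{\Eb U(h-c)}$ in the first direction.
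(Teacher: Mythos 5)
Your first direction (if $\eta^*$ is optimal then $h^*$ is a fixed point), routed entirely through Lemma \ref{thm:condstar} together with the observation that the two centering identities read literally as $\Hc^X(h^*)=h^*$ and $\Hc^Y(h^*)=h^*$, is correct, and is in fact more direct than the paper's treatment of that implication (which re-invokes Theorem \ref{thm:exp} and strict concavity). The problem is the converse. After reducing $\Hc(h^*)=h^*$ to $a+b=0$ with $a=\Ec^X(h^*)-\Ebt\Ec^X(h^*)$ a function of $X_T$ and $b$ a function of $Y_T$, you conclude $a\equiv 0$ by arguing that $a=-b$ is measurable with respect to both $\sigma(X_T)$ and $\sigma(Y_T)$ and that $X_T\ind Y_T$ under the extended measure of Remark \ref{RNextension}. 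This step is not justified in the generality of the lemma. Writing $a=\phi(X_T)$ and $b=\psi(Y_T)$, the identity $\phi(X_T(\omega))+\psi(Y_T(\omega))=0$ holds only at the pairs $(x,y)$ actually realized by some $\omega\in\Omega$, whereas the product measure $\Pbt_{X_T,Y_T}$ charges the entire rectangle $\{X_T(\omega)\}_{\omega}\times\{Y_T(\omega)\}_{\omega}$, on which $\phi(x)$ and $-\psi(y)$ need not agree; so there is no single random variable on the product space that is simultaneously $\sigma(X_T)$- and $\sigma(Y_T)$-measurable and to which independence can be applied. Your argument implicitly requires the joint $\Pb$-support to be the full rectangle, which is precisely condition \eqref{condition_prop} of Proposition \ref{thm:bounded} --- a condition that is \emph{not} part of Assumption \ref{assumption2}, is introduced only later, and fails in the paper's own Example \ref{ExDiscrete1} (see Remark \ref{cycle}). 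As a minimal illustration, if $|\Omega|=2$ with $X_T=(1,2)$ and $Y_T=(3,4)$, then every random variable is both $\sigma(X_T)$- and $\sigma(Y_T)$-measurable, and $a+b=0$ together with $\Ebt a=\Ebt b=0$ does not by itself force $a=0$.

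The paper closes this gap differently, and you would need to do something similar. From $a=-b\in\sigma(Y_T)$ it does \emph{not} deduce $a=0$; instead it uses the conditional cash-invariance of $\Ec^Y$ under $\sigma(Y_T)$-measurable shifts to compute $\Hc(h^*)=h^*-\Ec^Y(h^*)+\Ebt\Ec^Y(h^*)$, which yields the second identity \eqref{eq:etastar2} directly. The first identity, i.e. $a=0$, is then obtained from an optimality argument rather than a measurability one: Theorem \ref{thm:exp} forces $\Eb U(-\Hc^X(h^*))=\Eb U(-h^*)$, and if $a\neq 0$ the strictly concave one-dimensional subproblem \eqref{eq:problem-3_f} over the convex set $\Ac_X$ would admit two distinct maximizers, a contradiction. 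Unless you are willing to assume full support \eqref{condition_prop}, your independence step should be replaced by an argument of this kind.
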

\begin{proof}
Suppose that $h^*\in B_{\Eb U(h-c)}$ is a fixed point of $\Hc$, and set $\eta^*:=h-h^*-c$. Then,
\begin{align*}
\left[\Ec^X(h^*)-\Ebt \Ec^X(h^*)\right]
	+\left[\Ec^Y(h^*-\Ec^X(h^*)+\Ebt \Ec^X(h^*))
		-\Ebt \Ec^Y(h^*-\Ec^X(h^*)+\Ebt \Ec^X(h^*))\right]
		=0,
\end{align*}
which implies $\Ec^X(h^*)-\Ebt \Ec^X(h^*)\in \sigma (Y_T)$. Hence, 
\begin{align*}
h^* &= \Hc^Y(\Hc^X(h^*))\\
&=	h^*
	-\left[\Ec^X(h^*)-\Ebt \Ec^X(h^*))\right]
	-\left[\Ec^Y(h^*-\Ec^X(h^*)+\Ebt \Ec^X(h^*))
		-\Ebt\left(\Ec^Y(h^*-\Ec^X(h^*)
			+\Ebt \Ec^X(h^*))\right)\right]\\
&=	h^*-\Ec^Y(h^*)
		+\Ebt\left(\Ec^Y(h^*)\right).
\end{align*}
That is, \eqref{eq:etastar2} is satisfied. On the other hand, by Theorem \ref{thm:exp}, 
\begin{align*}
\Eb U(-h^*)
	& =\Eb U(-\Hc(h^*))\\
	& =\Eb U(-\Hc^Y(\Hc^X((h^*))))\\
	& \geq \Eb U(-\Hc^X(h^*))\geq \Eb U(-h^*),
\end{align*}
which implies $\Eb U(-\Hc^X(h^*))=\Eb U(-h^*)$. Then, if $\Hc^X(h^*)\neq h^*$, there are multiple solutions to the problem 
\begin{align}\label{eq:problem-3_f}
	&\sup_{f \in \Ac_X} \Eb U(f(X_T)
		+\eta^*(X_T,Y_T) - h(X_T,Y_T)+c) , &
\Ac_X
	&:=	\sigma(X_T)\cap
		\{ f : \Ebt f(X_T) = 0 \}.
\end{align}
However, this is a contradiction, since $\Ac_X$ is convex and the target function in problem \eqref{eq:problem-3_f} is strictly concave, and so \eqref{eq:problem-3_f} admits at most one solution. Hence, $\Hc^X(h^*)=h^*$, and \eqref{eq:etastar1} is also satisfied. Thus, Lemma \ref{thm:condstar} implies that $\eta^*$ is optimal. On the other hand, if $\eta^*$ is optimal, then Theorem \ref{thm:exp} implies $\Eb U(\Hc(h-\eta^*-c)) = \Eb U(h-\eta^*-c)$, and so if $h-\eta^*-c$ is not a fixed point of $\Hc$, problem \eqref{eq:problem-3_entropic} admits multiple solutions. This would, however, contradict the strict concavity of the map $\mathscr{U}$ defined by \eqref{eq:EU}.
\end{proof}

\noindent
The next theorem is the main result of this section.

\begin{theorem}\label{thm:expscheme} Problem \eqref{eq:problem-3_entropic} admits a unique solution $\eta^*$ given by $h-h^*-c$, where $h^*\in B_{\Eb U(h-c)}$ satisfies
\begin{align*}
h^*:=\lim_{n\rightarrow \infty} \Hc^n(h-c).
\end{align*}
\end{theorem}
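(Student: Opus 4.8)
The plan is to split the proof into two parts: (i) show that problem \eqref{eq:problem-3_entropic} has a \emph{unique} maximizer $\eta^*$, which by Lemma \ref{thm:FixedPoint} is the same as saying that $\Hc$ has a unique fixed point $h^*=h-\eta^*-c$ in $B_{\Eb U(h-c)}$; and (ii) show that the orbit $\Hc^n(h-c)$ converges to that fixed point. I would emphasize at the outset that, although $\Hc$ is Lipschitz (the Proposition above), its Lipschitz constant equals $9>1$, so the Banach fixed-point theorem is unavailable and part (ii) must be extracted from the variational structure of $\Hc$ rather than from a contraction estimate.

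Part (i). Under Assumption \ref{assumption2} the feasible set $\Ac''$ of \eqref{eq:problem-3_entropic} is a linear subspace of $\Rb^N$, and the map $\mathscr{U}(\eta)=\Eb U(\eta-h+c)$ from \eqref{eq:EU} is continuous and strictly concave there. I would check that $\mathscr{U}$ is coercive on $\Ac''$: if $\eta\in\Ac''$ is not identically zero, then $\Ebt\eta=0$ together with $\Pbt(\om)>0$ for all $\om$ (which holds since $\Pb\sim\Pbt$ and $\Pb(\om)>0$) forces $\eta(\om)<0$ at some $\om$, so with $U(x)=-\ee^{-\gam x}$ the corresponding summand drives $\mathscr{U}(t\eta)\to-\infty$ as $t\to\infty$; hence $\mathscr{U}$ attains a maximum on $\Ac''$, and the maximizer is unique by the strict-concavity uniqueness theorem stated above. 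Lemma \ref{thm:FixedPoint} then produces a fixed point $h^*=h-\eta^*-c\in B_{\Eb U(h-c)}$, and any other fixed point in that set would, again via Lemma \ref{thm:FixedPoint}, give a second maximizer of \eqref{eq:problem-3_entropic} — impossible — so $h^*$ is the unique fixed point of $\Hc$.

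Part (ii). Put $\xi_n:=\Hc^n(h-c)$, with $\xi_0=h-c$; since $\Ebt\xi_0=\Ebt h-c=0$ and $h-\xi_0-c=0\in\Ac''$, $\xi_0$ belongs to a level set $B_R$ of the form \eqref{eq:Domain}. By Theorem \ref{thm:exp} applied to the one-dimensional hedging subproblems — this is exactly the inequality $\Eb U(-\xi)\le\Eb U(-\Hc^X(\xi))$ used in the proof that $\Hc$ maps $B_R$ into itself — each application of $\Hc^X$ or $\Hc^Y$ weakly increases $\xi\mapsto\Eb U(-\xi)$, so by the Proposition the whole orbit stays in $B_R$, which is closed and bounded in $\Rb^N$, hence compact. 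Therefore $n\mapsto\Eb U(-\xi_n)$ is nondecreasing and bounded above, and converges to some $V_\infty$. Let $\xi_\infty$ be any accumulation point of $\{\xi_n\}$, say $\xi_{n_k}\to\xi_\infty$; continuity of $\Eb U(-\,\cdot\,)$ and of $\Hc$ gives, along $n_k$ and along $n_k+1$, that $\Eb U(-\xi_\infty)=V_\infty=\Eb U(-\Hc(\xi_\infty))$. Then the chain $\Eb U(-\xi_\infty)\le\Eb U(-\Hc^X(\xi_\infty))\le\Eb U(-\Hc^Y(\Hc^X(\xi_\infty)))=\Eb U(-\Hc(\xi_\infty))=\Eb U(-\xi_\infty)$ collapses, so $\Eb U(-\Hc^X(\xi_\infty))=\Eb U(-\xi_\infty)$; running the strict-concavity argument of Lemma \ref{thm:FixedPoint} on the single-asset problem \eqref{eq:problem-3_f} with $h-\xi_\infty-c$ in place of $\eta^*$ forces $\Hc^X(\xi_\infty)=\xi_\infty$, and the symmetric argument on the $Y$-side gives $\Hc^Y(\xi_\infty)=\xi_\infty$; hence $\xi_\infty$ is a fixed point of $\Hc$ in $B_R$, so by part (i) $\xi_\infty=h^*$. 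Since $\{\xi_n\}$ lies in a compact set and all of its accumulation points equal $h^*$, the full sequence converges to $h^*$; with $\eta^*=h-h^*-c$ (Lemma \ref{thm:FixedPoint}) this is the asserted representation.

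The crux — and the step I expect to cost the most — is part (ii): without a contraction, convergence of the iterates has to be squeezed out of three facts working together, namely the monotonicity $\Eb U(-\Hc(\xi))\ge\Eb U(-\xi)$ coming from the one-dimensional optimality in Theorem \ref{thm:exp}, the compactness of $B_R$ that the finite-state Assumption \ref{assumption2} provides, and the strict concavity of $U$, which upgrades the ``$\Hc$ preserves the utility level'' identity at an accumulation point to the genuine fixed-point identity. Relaxing Assumption \ref{assumption2} would require replacing norm-compactness by a weak-compactness / reflexivity argument, as anticipated in the remark on extensions.
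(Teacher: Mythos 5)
Your proposal is correct and follows essentially the same route as the paper's proof: existence and uniqueness of the maximizer from strict concavity together with compactness of the level set $B_{\Eb U(h-c)}$ (your coercivity check is just the content of the paper's boundedness lemma), the identification with the unique fixed point via Lemma \ref{thm:FixedPoint}, and convergence of the orbit from the monotonicity of $n\mapsto\Eb U(-\Hc^n(h-c))$ plus a compactness/subsequence argument showing every accumulation point is a fixed point. Your part (ii) is in fact slightly more explicit than the paper's at the step where the accumulation point is shown to be fixed (you collapse the chain of utility inequalities and invoke uniqueness of the one-dimensional optimizers separately for $\Hc^X$ and $\Hc^Y$, where the paper compresses this into ``if $h^{**}$ is not optimal, Theorem \ref{thm:exp} implies a strict improvement''), but the underlying argument is the same.
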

\begin{proof}
Because, (1), the map $\mathscr{U}$ defined in \eqref{eq:EU} is strictly concave and continuous, (2), it satisfies 
\begin{align*}
\mathscr{U}(h-B_{\Eb U(h-c)}+c)\geq \mathscr{U}(0),
\end{align*}
and, (3), the set
\begin{align*}
h-B_{\Eb U(h-c)}-c=\mathscr{A}''\cap\{\eta:\Eb U(\eta-h+c)\geq \Eb U(h-c)\}
\end{align*}
is convex and compact, problem \eqref{eq:problem-3_entropic} admits a unique solution that belongs to $h-B_{\Eb U(h-c)}-c$. By Lemma \ref{thm:FixedPoint}, such solution must correspond to $h-h^*-c$, where $h^*$ is the unique fixed point of the operator $\Hc$ on $B_{\Eb U(h-c)}$. Consider the sequence $\{\Hc^n(h-c)\}_{n\in\Nb}\subset B_R$. Since $\{\Eb U(\Hc^n(h-c)\}_{n\in\Nb}$ is increasing (by Theorem \ref{thm:exp}) and bounded (since $\Eb U$ is continuous and $B_{\Eb U(h-c)}$ is compact), it admits a limit $e^*\in \Rb$. Since $B_{\Eb U(h-c)}$ is compact, any subsequence of $\{\Hc^n(h-c)\}_{n\in\Nb}$ admits a convergent subsequence. Let $\{\Hc^{n_{k}}(h-c)\}_{k\in \Nb}$ be one such convergent subsequence, and denote by $h^{**}$ its limit. By continuity of $\Hc$,
\begin{align*}
\lim_{k\rightarrow \infty} \Hc^{n_{k}+1}(h-c)
=\Hc(\lim_{k\rightarrow \infty} \Hc^{n_{k}}(h-c))
=\Hc(h^{**}).
\end{align*}
If $h^{**}$ is not optimal, Theorem \eqref{thm:exp} implies
\begin{align*}
\Eb U(\Hc(h^{**}))>\Eb U(h^{**}).
\end{align*}
On the other hand, $\Eb U(\Hc^{n_k+1}(h-c))$ and $\Eb U(\Hc^{n_k}(h-c))$ both converge to $e^*$, so it must be the case that $h^{**}$ is optimal, i.e., $h^{**}=h^*$. Hence, every subsequence of $\{\Hc^{n}(h-c)\}_{n\in \Nb}$ contains a further subsequence that converges to $h^*$. This implies that $\Hc^n(h-c)\rightarrow h^*$.
\end{proof}

\noindent
We have constructed a fixed point iteration for a Lipschitz map that is not a contraction nor nonexpansive. This result is similar to certain generalizations of the contraction mapping principle, which assume the existence of a partial order on the Banach space on which the iteration is performed (see \cite{Latif}).
\\[0.5em]
Based on Theorem \ref{thm:expscheme}, and assuming an equal split of the hedging capital among the two risks, one can set, for $N\in \Nb$,
\begin{align}
f^*_N:&=\frac{c}{2}+\sum_{n=0}^{N} \left[\Ec^X(\Hc^n(h-c))-\Ebt \Ec^X(\Hc^n(h-c))\right], \label{eq:fstarN}\\ 
g^*_N:&=\frac{c}{2}+\sum_{n=0}^{N} \left[\Ec^Y\left(\Hc^n(h-c)-\Ec^X(\Hc^n(h-c))-\Ebt \Ec^X(\Hc^n(h-c))\right)\right.\\
	& \quad \quad
	-\left.\Ebt\Ec^Y\left(\Hc^n(h-c)-\Ec^X(\Hc^n(h-c))-\Ebt \Ec^X(\Hc^n(h-c))\right)\right]. \label{eq:gstarN}
\end{align}
Although $f^*_N+g^*_N$ is guaranteed to converge to $\eta^*=h-h^*+c$, this does not imply that the sequences $\{f^*_N\}_{N\in\Nb}$ and $\{g^*_N\}_{N\in\Nb}$ will converge. This poses a computational challenge, since, if $f^*_N(\omega)\rightarrow +\infty$ for some $\omega\in \Omega$, and so also $g^*_N(\omega)\rightarrow -\infty$ because $f_N^*(\omega)+g^*_N(\omega)\rightarrow \eta^*(\omega)$, the numerical error in the implementation of \eqref{eq:fstarN} and \eqref{eq:gstarN} may become too large. The next result provides a condition for the boundedness of the sequences $\{f^*_N\}_{N\in\Nb}$ and $\{g^*_N\}_{N\in\Nb}$.

\begin{proposition}\label{thm:bounded} Suppose that 
\begin{align}\label{condition_prop}
\Pb(X=x,Y=y)>0 ,
\end{align}
for every $x\in \{X_T(\omega)\}_{\omega\in\Omega}$ and $y\in \{Y_T(\omega\}_{\omega\in\Omega}$. Then, the sequences $\{f_N^*\}_{N\in\Nb}$ and $\{g_N^*\}_{N\in\Nb}$, defined by expressions \eqref{eq:fstarN} and \eqref{eq:gstarN}, are bounded.
\end{proposition}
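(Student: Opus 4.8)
The plan is to recognize $f^*_N$ and $g^*_N$ as $\tfrac c2$ plus partial sums whose summands lie in $\sigma(X_T)$ and in $\sigma(Y_T)$, respectively, and then to exploit the fact that, since $f^*_N+g^*_N$ already converges (Theorem~\ref{thm:expscheme}), condition~\eqref{condition_prop} forces each of the two partial sums to converge separately. Concretely, set $\xi_n:=\Hc^n(h-c)$ and define
\begin{align*}
a_n:=\Ec^X(\xi_n)-\Ebt\Ec^X(\xi_n),\qquad b_n:=\Ec^Y(\Hc^X(\xi_n))-\Ebt\Ec^Y(\Hc^X(\xi_n)).
\end{align*}
Reading off the definitions \eqref{eq:H}--\eqref{eq:epsY}, one has $a_n\in\sigma(X_T)$, $b_n\in\sigma(Y_T)$, $\Ebt a_n=\Ebt b_n=0$, and $\Hc^X(\xi_n)=\xi_n-a_n$, $\xi_{n+1}=\Hc^Y(\Hc^X(\xi_n))=\xi_n-a_n-b_n$. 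Using the cash invariance of $\Ec^Y$, the $n$-th summand of \eqref{eq:gstarN} equals $b_n$ (the additive constant inside $\Ec^Y$ passes outside and is removed by the subsequent $\Ebt$-centering), so that $f^*_N=\tfrac c2+F_N$ and $g^*_N=\tfrac c2+G_N$, where $F_N:=\sum_{n=0}^N a_n\in\sigma(X_T)$ and $G_N:=\sum_{n=0}^N b_n\in\sigma(Y_T)$ are both $\Ebt$-centered. Telescoping, $F_N+G_N=\xi_0-\xi_{N+1}=(h-c)-\Hc^{N+1}(h-c)$, which by Theorem~\ref{thm:expscheme} converges to the optimizer $\eta^*$ of \eqref{eq:problem-3_entropic}; in particular $\eta^*\in\Ac''$.

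The essential use of hypothesis~\eqref{condition_prop} is the following claim: any random variable that is simultaneously $\sigma(X_T)$- and $\sigma(Y_T)$-measurable is constant. Indeed, if $\phi=\psi(X_T)=\chi(Y_T)$, fix any value $y_0$ attained by $Y_T$; for every value $x$ attained by $X_T$, \eqref{condition_prop} produces $\omega\in\Omega$ with $X_T(\omega)=x$ and $Y_T(\omega)=y_0$, hence $\psi(x)=\chi(y_0)$, so $\psi$ is constant. Consequently the linear map $\Phi:(f,g)\mapsto f+g$ from $\{f\in\sigma(X_T):\Ebt f=0\}\times\{g\in\sigma(Y_T):\Ebt g=0\}$ onto $\Ac''$ is injective (if $f+g=0$ then $f=-g$ is a centered constant, i.e.\ $0$), and it is onto $\Ac''$ after recentering; being a linear bijection between finite-dimensional normed spaces, $\Phi^{-1}$ is bounded. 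Thus there is a constant $C$ with $\|f\|_\infty+\|g\|_\infty\le C\,\|f+g\|_\infty$ for all such pairs $(f,g)$.

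Combining the two ingredients: let $F^\star,G^\star$ be the unique centered $\sigma(X_T)$- and $\sigma(Y_T)$-measurable functions with $F^\star+G^\star=\eta^*$ (they exist since $\eta^*\in\Ac''=\operatorname{ran}\Phi$). Since $F_N-F^\star$ and $G_N-G^\star$ are centered and measurable with respect to $\sigma(X_T)$ and $\sigma(Y_T)$, respectively, the stability estimate gives
\begin{align*}
\|F_N-F^\star\|_\infty+\|G_N-G^\star\|_\infty\;\le\;C\,\|(F_N+G_N)-\eta^*\|_\infty\;\longrightarrow\;0\qquad(N\to\infty),
\end{align*}
so $F_N\to F^\star$ and $G_N\to G^\star$. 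Hence $\{F_N\}$ and $\{G_N\}$ are convergent, and therefore $\{f^*_N\}=\{\tfrac c2+F_N\}$ and $\{g^*_N\}=\{\tfrac c2+G_N\}$ are bounded --- in fact convergent --- which is the assertion.

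The main obstacle, and the only place hypothesis~\eqref{condition_prop} is genuinely needed, is the second step: showing that under \eqref{condition_prop} the splitting of an element of $\Ac''$ into an $X_T$-measurable and a $Y_T$-measurable summand is unique, equivalently that $\sigma(X_T)\cap\sigma(Y_T)$ reduces to the constants. If the bipartite support graph of $(X_T,Y_T)$ were disconnected, $\Phi$ would have nontrivial kernel, the splitting of $F_N+G_N$ would carry genuine degrees of freedom, and one could have $F_N(\omega)\to+\infty$ (and $G_N(\omega)\to-\infty$) along that kernel while $F_N+G_N$ still converges --- exactly the numerical pathology flagged just before the statement. Everything else is routine bookkeeping with the cash-invariant operators $\Ec^X,\Ec^Y,\Hc^X,\Hc^Y$, together with the elementary fact that a linear bijection between finite-dimensional normed spaces has bounded inverse.
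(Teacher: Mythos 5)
Your proof is correct, and it reaches the stated conclusion by a genuinely different route from the paper. The paper argues by contradiction and propagation: if $f^*_{N_k}(\hat\omega)\to\pm\infty$ along a subsequence, then (since $f^*_N+g^*_N$ converges) $g^*_{N_k}\to\mp\infty$ at every state reachable from $\hat\omega$ through the support of $(X_T,Y_T)$; the full-support hypothesis \eqref{condition_prop} makes every state reachable, so the divergence propagates back to $f^*_{N_k}$ at every state, contradicting $\Ebt f^*_N=c/2$. You instead isolate the structural content of \eqref{condition_prop} — that $\sig(X_T)\cap\sig(Y_T)$ consists only of constants, i.e.\ the bipartite support graph is connected (the same point the paper makes informally in Remark \ref{cycle}) — and turn it into the statement that the sum map $\Phi:(f,g)\mapsto f+g$ on centered pairs is a linear bijection onto $\Ac''$, hence has bounded inverse in finite dimension. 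Combined with the telescoping identity $F_N+G_N=(h-c)-\Hc^{N+1}(h-c)\to\eta^*$ and the cash-invariance bookkeeping that identifies the summands of \eqref{eq:fstarN} and \eqref{eq:gstarN} as $a_n\in\sig(X_T)$ and $b_n\in\sig(Y_T)$ (all of which checks out), the stability estimate $\|F_N-F^\star\|_\infty+\|G_N-G^\star\|_\infty\le C\|(F_N+G_N)-\eta^*\|_\infty$ gives you convergence of $\{f^*_N\}$ and $\{g^*_N\}$, not merely boundedness. This is a strictly stronger conclusion than the paper's, it is quantitative (the error in the split is controlled by the error in the sum), and it makes transparent exactly where \eqref{condition_prop} enters and why it can be relaxed to connectivity of the support. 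The paper's argument is shorter but only yields boundedness and leaves the role of the hypothesis implicit.
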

\begin{proof}
Note that, since $\|h-h^*\|_{\infty}<\infty$ and $\|h\|_{\infty}<\infty$, if 
\begin{align*}
f^*_{N_k}(\hat{\omega})\rightarrow
	 \pm \infty ,
\end{align*}
along any subsequence $\{N_k\}_{k\in\Nb}$ and for some $\hat{\omega}\in \Omega$, then it must be the case that 
\begin{align*}
g^*_{N_k}\rightarrow \mp \infty ,
\end{align*}
for every $\omega\in \Omega$, which in turn implies 
\begin{align*}
f^*_{N_k}(\hat{\omega})\rightarrow \pm \infty ,
\end{align*}
for every $\omega\in \Omega$. However, this is a contradiction, since 
\begin{align*}
\Ebt\left[ f_N^*\right]= \frac{c}{2} ,
\end{align*}
for every $N\in \Nb$. Similarly for $g_N^*$.
Hence, $\{f_N^*\}_{N\in\Nb}$ and $\{g_N^*\}_{N\in\Nb}$ are bounded, as requested.
\end{proof}

\noindent
Based on Proposition \ref{thm:bounded}, the solution to problem \eqref{eq:problem-3_entropic} can be constructed using equations \eqref{eq:fstarN} and \eqref{eq:gstarN}, and for $N\in \Nb$ such that the $\ell^2$ error
\begin{align*}
\|\Hc(f^*_N+g^*_N-h)-f^*_{N}+g^*_{N}-h\|_2 ,
\end{align*}
is lower than some threshold level $\epsilon$. We implement this approach in our numerical examples.
\\[0.5em]
The condition \eqref{condition_prop} in Proposition \ref{thm:bounded} is sufficient but not necessary. The result will hold as long as one can ``cycle'' through all the values $\{X_T(\omega),Y_T(\omega)\}_{\omega\in \Omega}$ as we will see in Example \ref{ExDiscrete1} and Remark \ref{cycle} below.

\begin{remark}\label{rmk:reflexive}
The extension of the results of this section to infinite dimensional spaces is a topological matter. We limit ourselves to observe that a natural requirement for the validity in infinite dimension of the results of this section, and of Theorem \ref{thm:expscheme} in particular, is that traded assets belong to a reflexive Banach space over which the entropic risk measure is lower semicontinuous with respect to the weak topology. For instance, $L^2$ is a good candidate as the entropic risk measure is lower semicontinuous with respect to the weak topology on $L^2$ (because it admits dual representation on $L^2$). Another good candidate is the Orlicz heart $M^{\Phi}$, defined by the Young function $\Phi(x)=\ee^{-\gamma x}-1$ (see \cite{CheriditoLi}). It is a reflexive Banach space and, again, the entropic risk measure admits dual representation on it.
\end{remark}



\subsubsection{Examples}

We begin with a simple example with only three states of the world to illustrate that one obtains the optimal solution by iterating the operator $\Hc$.

\begin{example}\label{ExDiscrete1}
Suppose $|\Omega|=3$, $\gamma=1$, $h(X_T,Y_T)= (X_TY_T-1)^+$, and
\begin{align*}
X_T &= 
\begin{bmatrix}
1 \\
2 \\
1
\end{bmatrix}, &
Y_T &= 
\begin{bmatrix}
1 \\
1 \\
4
\end{bmatrix}, &
h(X_T,Y_T) &= 
\begin{bmatrix}
0\\
1\\
3
\end{bmatrix}.
\end{align*}
Furthermore, suppose that $\Pbt=\Pb$, $\Pb(\omega_1) = 0.2, \
\Pb(\omega_2) = 0.3, \
\Pb(\omega_3) = 0.5$, so that
\begin{align*}
p_{X|Y} &= \begin{bmatrix}
.4 & .4 & 0\\
.6 & .6 & 0\\
0 & 0 & 1\\
\end{bmatrix}, &
p_{Y|X} &= \begin{bmatrix}
.2857 & 0 & .7143\\
0 & 1 & 0\\
.7143 & 0 & .2857\\
\end{bmatrix}.
\end{align*}
In this case, we expect a perfect solution to be found since $f(X_T)$ and $g(Y_T)$ are each defined up to 2 degrees of freedom, and $h$ is a three-dimensional vector (i.e., there are three equations to be satisfied in four unknowns). Iterating the operator $\Hc$ yields convergence in 16 steps with an $\ell^2$-error of less than $1\mathrm{e}{-10}$. See Figure \ref{ExpHedging}. The solution obtained is
\begin{align*}
f^*(X_T) &= \begin{bmatrix}
-0.3336\\
0.6664\\
-0.3336
\end{bmatrix}, &
g^*(Y_T) &= \begin{bmatrix}
0.3336\\
0.3336\\
3.3336
\end{bmatrix},
\end{align*}
and note that, in fact, a perfect hedge is achieved. 
\\[0.5em]

\begin{remark}\label{cycle}
It is worth noting that, in the case considered in Example \ref{ExDiscrete1}, the condition in Proposition \ref{thm:bounded} does not hold. However, it is easy to see that the result of Proposition \ref{thm:bounded} is still true, since if, say, $g^*_N(\omega_3)\rightarrow \infty$, then $f^*_N(\omega_3)\rightarrow -\infty$, which implies $f^*_N(\omega_1)\rightarrow -\infty$, and so $g^*_N\rightarrow \infty$, a contradiction. Hence, we can still utilize the procedure outlined above and based on Theorem \ref{thm:expscheme}.
\end{remark}

\begin{remark}
It is also worth pointing out that the solution is unique once the capital allocation is fixed, since the system
\begin{align*}
\begin{bmatrix}
0\\
1\\
3\\
-0.0336\\
1.8336
\end{bmatrix}
=
\begin{bmatrix}
1 & 0 & 1 & 0\\
0 & 1 & 1 & 0\\
1 & 0 & 0 & 1\\
0.7 & 0.3 & 0 & 0\\
0 & 0 & 0.5 & 0.5 
\end{bmatrix}
\begin{bmatrix}
f(X_T(\omega_1))\\
f(X_T(\omega_2))\\
g(X_T(\omega_1))\\
g(X_T(\omega_3))
\end{bmatrix}
\end{align*}
admits a unique solution.
\end{remark}
\begin{figure}
\begin{center}
\begin{tabular}{cc}
\includegraphics[width=0.48\textwidth]{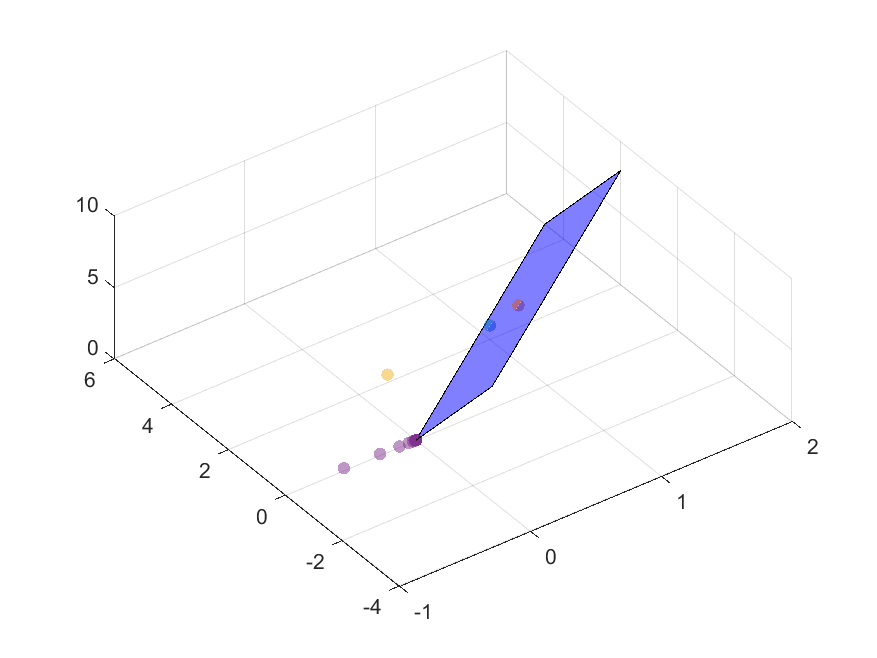}
\end{tabular}
\caption{
The yellow dot is the vector $h$. The blue plane is the space spanned by $X_T$ and $Y_T$, which are represented by the blue and red dots. The purple dots represent the sequence of portfolios obtained by iterating the operator $\Hc$. When the algorithm stops, the coordinates of the last purple dot are in the proximity of the origin $(0,0,0)$, showing that a perfect hedge is achieved.}\label{ExpHedging}
\end{center}
\end{figure}
\end{example}

\noindent
In our next example we consider a more realistic situation in which logarithmic returns on the underlying assets are assumed to be normally distributed. An approximation through simple random variables is in this case required to implement the iteration scheme above derived.

\begin{example} Consider now a claim paying off $h(X_T,Y_T):=(\ee^{X_T+Y_T}-1)^+$, with $(X_T,Y_T)\sim \mathscr{N}(\mu,\Sigma)$, where 
\begin{align}\label{eq:muSigma}
\mu &= \begin{bmatrix}
0.1\\
0.15
\end{bmatrix}, & 
\Sigma &= \begin{bmatrix}
1 & \sqrt{2}\rho \\
\sqrt{2}\rho & 2
\end{bmatrix},
\end{align}
and $\rho \in [-1,1]$ satisfies
\begin{align*}
\rho = \frac{Cov(X_T,Y_T)}{\sqrt{2}}.
\end{align*}
Assume again that $\Pbt = \Pb$ for simplicity (if $\Pb\neq \Pbt$, one simply needs to add the Radon Nikodym derivatives $p_X/\tilde{p}_X$ and $p_Y/\tilde{p}_Y$ to the solutions $f$ and $g$). Define
\begin{align*}
x^N &= \{-N+\delta n\}_{n=0,...,2N/\delta}, &
y^N &= \{-N+\delta n\}_{n=0,...,2N/\delta},
\end{align*}
where $N\in \Nb$ and $\delta>0$ are fixed, and set, for every $i,j\in \{0,...,2N/\delta\}$,
\begin{align*}
p_{X,Y}^N(i,j)
&:=\Pb\left(X_T\in [x^N_i,x^N_{i+1}), Y_T\in [y^N_j,y^N_{j+1})\right), \\ 
p_X^N(i) &:= \Pb\left(X_T\in [x^N_i,x^N_{i+1})\right), \
p_Y^N(j)  := \Pb\left(Y_T\in [y^N_i,y^N_{j+1})\right).
\end{align*}
Observe that, in this case, the condition in Proposition \ref{thm:bounded} holds as long as the correlation $\rho$ between $X_T$ and $Y_T$ satisfies $|\rho|<1$.\\[0.5em]
Proceeding analogously as in Example \ref{ExDiscrete1}, one can define
\begin{align*}
p_{X|Y}^N(i,j)
&:=\frac{\Pb\left(X_T\in [x^N_i,x^N_{i+1}),
				Y_T\in [y^N_j,y^N_{j+1})\right)}
	{\Pb\left(Y_T\in [y^N_j,y^N_{j+1})\right)}
 =\frac{p_{X,Y}^N(i,j)}{p^N(j)},\\
p_{Y|X}^N(j,i)
&:=\frac{\Pb\left(X_T\in [x^N_i,x^N_{i+1}),
				Y_T\in [y^N_j,y^N_{j+1})\right)}
	{\Pb\left(X_T\in [x^N_j,x^N_{j+1})\right)}
 =\frac{p_{X,Y}^N(i,j)}{p^N(i)}.
\end{align*}
from which the operator $\Hc$ can be constructed. 
\\[0.5em]
The iteration scheme defined by Theorem \ref{thm:expscheme} and equations \eqref{eq:fstarN} and \eqref{eq:gstarN} was here implemented assuming $\rho=-0.1$, $c=0$ and $N=4$.\footnote{Note that $\Pb(Y_T<-4)\approx 0.02$ and $\Pb(Y_T>4)\approx 0.03$, and similarly for $X_T$. While this is an illustrative example, higher precision may be obtained by truncating the random vector $(X_T,Y_T)$ along a level curve of its cumulative distribution function.} Furthermore, we considered $\gamma \in \{0.05, 0.1, 1, 10, 20\}$. The optimal hedges $f^*$ and $g^*$ are shown in Figure \ref{fig:ExpHedgingNormal}(a)-(b). Note that lower values of $\gamma$, i.e. lower risk aversion, provide positive profits in states of higher probability at the cost of higher losses in states with lower probability. 
\\[0.5em]
Figure \ref{fig:ExpHedgingNormal}(c)-(d) depicts the functions $f^*$ and $g^*$ for different values of the correlation coefficient $\rho$ between $X_T$ and $Y_T$. As mentioned, when $\rho \in\{-1,1\}$, the condition and the result of Proposition \ref{thm:bounded} do not hold. In fact, for values of $\rho$ that are close to $\pm 1$, our algorithm fails to converge due to the fact that $p^N_{X,Y}(i,j)\approx 0$ for some values of $i,j\in \{0,...,2N/\delta\}$. 

\begin{figure}
\centering
\begin{subfigure}{0.4\textwidth}
    \includegraphics[width=\textwidth]{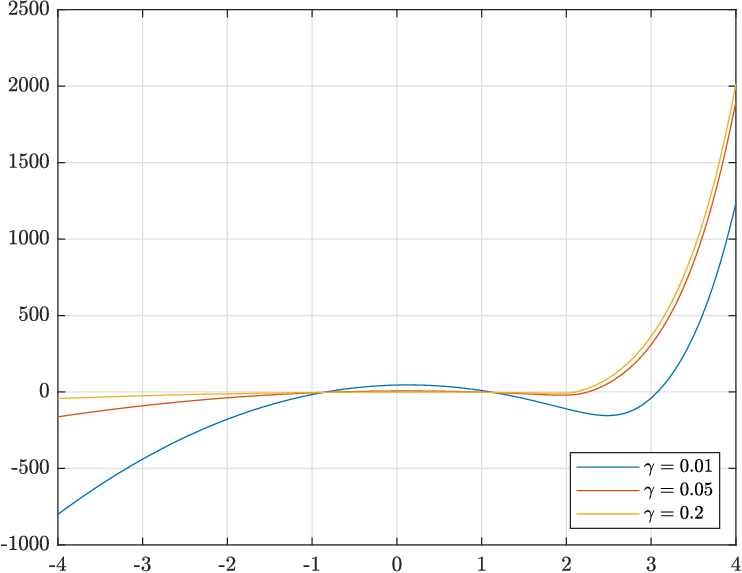}
    \caption{}
\end{subfigure}
\hfill
\begin{subfigure}{0.4\textwidth}
    \includegraphics[width=\textwidth]{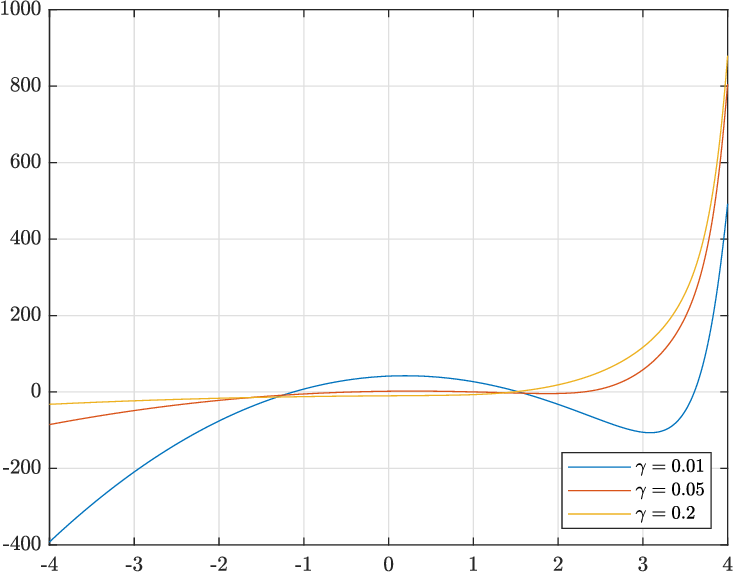}
    \caption{}
\end{subfigure}
\centering
\begin{subfigure}{0.4\textwidth}
    \includegraphics[width=\textwidth]{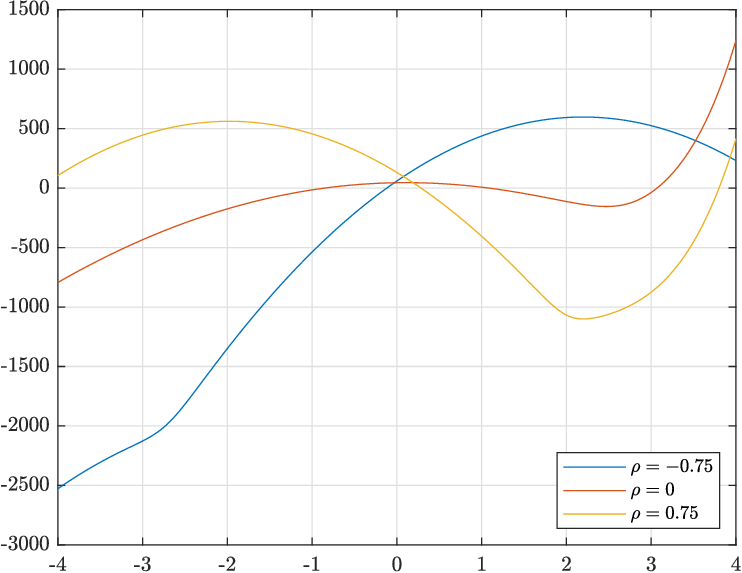}
    \caption{}
\end{subfigure}
\hfill
\begin{subfigure}{0.4\textwidth}
    \includegraphics[width=\textwidth]{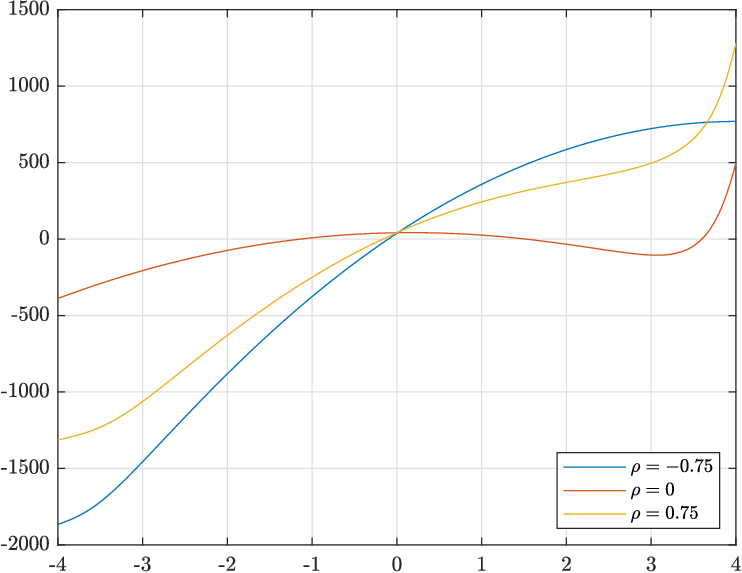}
    \caption{}
\end{subfigure}
\caption{Figure (a) and Figure (b) show, respectively, the optimal $f^*$  and $g^*$ for problem \eqref{eq:problem-3} with exponential utility and different values of $\gamma$, and assuming that $X_T$ and $Y_T$ are jointly normally distributed with $\mu$ and $\Sigma$ as in equation \eqref{eq:muSigma} and $\rho=-0.1$. Figure (c) and Figure (d) depict the optimal $f^*$ and $g^*$ for different values of the correlation $\rho$ between $X_T$ and $Y_T$, assuming that they are jointly normally distributed with $\mu$ and $\Sigma$ as in equation \eqref{eq:muSigma} and that utility is exponential with $\gamma = 0.01$.}
\label{fig:ExpHedgingNormal}
\end{figure}

\end{example}

\subsection{Mean-variance static hedging}

The first order conditions are also tractable in the case of quadratic utility, as shown in the following result.

\begin{theorem}[Mean-variance hedging]
Suppose the function $U$ in problem \eqref{eq:problem-3} is a concave quadratic function $U(x) = \gam x -  \tfrac{1}{2} x^2$.
Then, the pair of optimized hedges, $f^*$ and $g^*$, satisfies
\begin{align}
f^*(x)
	&=	\Eb[ h(X_T,Y_T) - g^*(Y_T) | X_T = x ] + \lam^* \frac{ \pt_X(x) }{p_X(x)} + \gam , \label{eq:fstar-quad} \\
g^*(y)
	&=	\Eb[ h(X_T,Y_T) - f^*(X_T) | Y_T = y ] + \lam^* \frac{ \pt_Y(y) }{p_Y(y)} + \gam , \label{eq:gstar-quad}
\end{align}
where $\lam^*$ is given by either
\begin{align}
\lam^*
	&=	0 , &
	&\text{or}&
\lam^*
	&=	\frac{c - 2\gam - \Ebt \Eb[ h(X_T,Y_T) - g^*(Y_T) | X_T ] -  \Ebt \Eb[ h(X_T,Y_T) - f^*(X_T) | Y_T ] }
			{\Ebt(\pt_X(X_T)/p_X(X_T)) + \Ebt( \pt_Y(Y_T)/p_Y(Y_T)) \Big) } . \label{eq:options2}
\end{align}
\end{theorem}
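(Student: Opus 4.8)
The plan is to follow the template of Theorem~\ref{thm:quad}, now with the coupled first order conditions \eqref{eq:opt-f}--\eqref{eq:opt-g} in place of the single condition \eqref{eq:optimality-1}, and to close with the complementary slackness relation \eqref{eq:lambda-condition-2}. First I would substitute $U'(z)=\gam-z$ into \eqref{eq:opt-f}, so that the integrand becomes $p_{X,Y}(x,y)\bigl(\gam-f^*(x)-g^*(y)+h(x,y)\bigr)$; pulling the $y$-independent pieces out of the $\dd y$-integral via $\int\dd y\,p_{X,Y}(x,y)=p_X(x)$ and dividing through by $p_X(x)$ --- legitimate since $\Pb\sim\Pbt$ by Assumption~\ref{assume1}, so $p_X>0$ a.e.\ --- leaves $f^*(x)$ expressed through $\tfrac{1}{p_X(x)}\int\dd y\,p_{X,Y}(x,y)\bigl(h(x,y)-g^*(y)\bigr)=\Eb[ h(X_T,Y_T)-g^*(Y_T) | X_T=x ]$ together with $\lam^*\pt_X(x)/p_X(x)+\gam$; this is \eqref{eq:fstar-quad}. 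The mirror-image computation on \eqref{eq:opt-g}, integrating in $\dd x$ and dividing by $p_Y(y)$, gives \eqref{eq:gstar-quad}.

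For $\lam^*$ I would emphasize that, because $U$ is a concave quadratic and hence not monotone, the argument used in Theorems~\ref{thm:exp} and~\ref{thm:pow} that the budget constraint binds is unavailable; instead \eqref{eq:lambda-condition-2} yields directly the dichotomy $\lam^*=0$, or else $\Ebt f^*(X_T)+\Ebt g^*(Y_T)=c$. In the second case I would apply $\Ebt$ to \eqref{eq:fstar-quad} and to \eqref{eq:gstar-quad}, add the two resulting scalar identities, replace the left-hand side by $c$, and solve the resulting linear equation for $\lam^*$; the denominator $\Ebt\bigl(\pt_X(X_T)/p_X(X_T)\bigr)+\Ebt\bigl(\pt_Y(Y_T)/p_Y(Y_T)\bigr)$ is strictly positive, again because $\Pb\sim\Pbt$, so \eqref{eq:options2} is well defined. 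This reproduces the stated expressions for $\lam^*$.

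I do not expect a serious obstacle here: the only points requiring care are the interchange of the $\dd x$- and $\dd y$-integrations (Tonelli/Fubini, valid once $h$ is integrable against $p_{X,Y}$) and the divisions by the marginal densities $p_X$, $p_Y$ (covered by the equivalence $\Pb\sim\Pbt$ of Assumption~\ref{assume1}). It is worth noting, consistently with Remark~\ref{Uniq}, that \eqref{eq:fstar-quad}--\eqref{eq:gstar-quad} constitute a coupled system --- $f^*$ appears inside the conditional expectation defining $g^*$ and conversely --- so the result certifies only that an optimal pair must solve these relations, rather than exhibiting it in closed form; since the theorem asserts precisely this, nothing further is needed.
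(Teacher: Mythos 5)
Your proposal is correct and follows essentially the same route as the paper: the paper obtains \eqref{eq:fstar-quad}--\eqref{eq:gstar-quad} by observing that \eqref{eq:opt-f} is \eqref{eq:optimality-1} with $h(x,y)$ replaced by $h(x,y)-g^*(y)$ and citing the single-asset quadratic result, whereas you redo the substitution $U'(z)=\gam-z$ directly, which is the identical computation. The treatment of $\lam^*$ via complementary slackness and taking $\Ebt$ of the two optimality relations is exactly the paper's argument.
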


\begin{proof}
Comparing \eqref{eq:optimality-1} and \eqref{eq:opt-f}, we see that $h(x,y)$ in the former has been replaced by $h(x,y) - g^*(y)$ in the latter.  As such, we find from \eqref{eq:f-star-quad} that $f^*$ and $g^*$ satisfy \eqref{eq:fstar-quad} and \eqref{eq:gstar-quad}, respectively, where the expression for $g^*$ follows by comparing \eqref{eq:opt-f} and \eqref{eq:opt-g}.
Next, from \eqref{eq:lambda-condition-2}, we must have either
\begin{align}
0
	&=	\lam^* , &
	&\text{or}&
c
	&=	\Ebt f^*(X_T) + \Ebt g^*(Y_T) . \label{eq:options}
\end{align}
inserting \eqref{eq:fstar-quad} and \eqref{eq:gstar-quad} into the second equation in \eqref{eq:options} we have
\begin{align}
c
	&=	\Ebt \Eb[ h(X_T,Y_T) - g^*(Y_T) | X_T ] + \lam^* \Ebt \Big( \frac{ \pt_X(X_T) }{p_X(X_T)} \Big)  + \gam \\ &\quad
			+ \Ebt \Eb[ h(X_T,Y_T) - f^*(X_T) | Y_T ] + \lam^* \Ebt \Big( \frac{ \pt_Y(Y_T) }{p_Y(Y_T)} \Big) + \gam .
\end{align}
Solving for $\lam^*$ yields the second equality in \eqref{eq:options2}.
\end{proof}


\noindent
The expressions for the optimizers in \eqref{eq:fstar-quad} and \eqref{eq:gstar-quad} provide the basis for a numerical scheme to compute the solutions. Also, we note that, by inserting \eqref{eq:gstar-quad} into \eqref{eq:fstar-quad} and viceversa, we obtain
\begin{align}
f^*(x)
	&=	\Eb \Big[ h(X_T,Y_T) \Big| X_T = x \Big] + \lam^* \frac{ \pt_X(x) }{p_X(x)} 
			-  \lam^* \Eb \Big[ \frac{ \pt_Y(Y_T) }{p_Y(Y_T)} \Big| X_T = x \Big] \\ &\quad
			- \Eb \Big[ \Eb \big[ h(X_T,Y_T) - f^*(X_T) \big| Y_T \big] \Big| X_T = x \Big] , \label{eq:fstar-prob} \\
g^*(y)
	&=	\Eb \Big[ h(X_T,Y_T) \Big| Y_T = y \Big] + \lam^* \frac{ \pt_Y(y) }{p_Y(y)}
			- \lam^* \Eb \Big[ \frac{ \pt_X(X_T) }{p_X(X_T)} \Big| Y_T = y \Big] \\ &\quad
			- \Eb \Big[ \Eb \big[ h(X_T,Y_T) - g^*(Y_T) \big| X_T \big] \Big| Y_T = y \Big], \label{eq:gstar-prob}
\end{align}
so that, at least when $\lam^*=0$, $f^*$ and $g^*$ can be defined as fixed points of a given map.

\section{Indifference pricing}\label{sec:indiff}

So far, in both of the static hedging problems considered, the price $c$ of the claim to be hedged is given or known. This assumption is reasonable, as one can think of a situation with multiple trading desks specialized in different markets and each providing quotes for the instruments traded. Such quotes can be generated quickly though existing methodologies (in both static and dynamic settings), such as conic finance (see \cite{MadanCherny}, \cite{Shirai}) or good deal bounds (see \cite{Cochrane}). These methods can also be used for the purpose of hedging in incomplete markets, but that requires solving a challenging optimization problem. Our results are then useful as they can be applied to the portfolio composed of each desk's position and are simpler to implement.\\[0.5em]
A natural question remains, however, for the trading parties, i.e., what is the price implied by our hedging methodology. To that end, we apply the notion of indifference pricing (see \cite{Carmona} and references therein) in the context of static hedging. The indifference price can be interpreted as the investor's reservation value for the claim $h$, given that it will be statically and partially hedged using $f$. Let us define the \textit{indifference price} $p$ for $\nu$ derivatives, each paying $h(X_T,Y_T)$, as the unique solution of the following equation
\begin{align}
\sup_{f \in \Ac(c-p  \nu)} \Eb \, U(f(X_T) + \nu  h(X_T,Y_T))
	&=	\sup_{f \in \Ac(c)} \Eb \, U(f(X_T)) , &
\Ac(c)
	&=	\{ f : \Ebt f(X_T) = c \} . \label{eq:indiff-step0}
\end{align}
Here, $\nu$ may be positive (indicating a buyer's price) and negative (indicating a seller's price). Hence, the indifference pricing approach can generate bid and ask prices, which can be compared to observed market prices, when available.
\\[0.5em]
With an example, we now illustrate how an analytical solution for the indifference price can be obtained in our setup.
\begin{example}
Suppose the utility function is exponential $U(x) = -\ee^{- \gam x}/\gam$.  Then taking $h(x,y) = 0$ we have from \eqref{eq:f-star-exp} that
\begin{align}
\sup_{f \in \Ac(c)} \Eb \, U(f(X_T))
	&=	\Eb \, U(f_1^*(X_T)) , &
f_1^*(x)
	&= c + \frac{1}{\gam} \log \Big( \frac{p_X(x)}{\pt_X(x)} \Big) - \frac{1}{\gam} \int \dd \xi \, \pt_X(\xi) \log \Big( \frac{p_X(\xi)}{\pt_X(\xi)} \Big) , 
\end{align}
and thus
\begin{align}
\sup_{f \in \Ac(c)} \Eb \, U(f(X_T))
	&=	\frac{- \ee^{- \gam c}}{\gam} \exp \Big( \int \dd \xi \, \pt_X(\xi) \log \Big( \frac{p_X(\xi)}{\pt_X(\xi)} \Big) \Big) . \label{eq:indiff-step1}
\end{align}
Similarly, we have from \eqref{eq:f-star-exp} with $c \to c - p  \nu$ and $h(x,y) \to \nu  h(x,y)$ that
\begin{align}
\sup_{f \in \Ac(c-p  \nu)} \Eb \, U(f(X_T) + \nu  h(X_T,Y_T))
	&= \Eb \, U(f_2^*(X_T) + \nu  h(X_T,Y_T)) , \\
f_2^*(x)
	&=	c - p  \nu + \frac{1}{\gam} \log \Big( \frac{1}{\pt_X(x)}  \int \dd y \, p_{X,Y}(x,y) \ee^{\gam \nu  h(x,y)} \Big) \\ &\quad
			- \frac{1}{\gam} \int \dd \xi \, \pt_X(\xi) \log \Big( \frac{1}{\pt_X(\xi)} \int \dd y \, p_{X,Y}(\xi,y) \ee^{\gam \nu  h(\xi,y)}  \Big) .
\end{align}
and thus
\begin{align}
\sup_{f \in \Ac(c-p  \nu)} \Eb \, U(f(X_T) + \nu  h(X_T,Y_T)) 
	&=	\frac{- \ee^{- \gam (c - p  \nu)}}{\gam}
			\exp \Big( \int \dd \xi \, \pt_X(\xi) \log \Big( \frac{1}{\pt_X(\xi)} \int \dd y \, p_{X,Y}(\xi,y) \ee^{\gam \nu  h(\xi,y)}  \Big) \Big) \\ &\quad
			\int \dd \xi \, p_X(\xi) \Big( \frac{1}{\pt_X(\xi)}  \int \dd y \, p_{X,Y}(\xi,y) \ee^{\gam \nu  h(\xi,y)} \Big)^{-1}.
	\label{eq:indiff-step2}
\end{align}
Inserting \eqref{eq:indiff-step1} and \eqref{eq:indiff-step2} into \eqref{eq:indiff-step0} and solving for $\gam p \nu$ we obtain
\begin{align}\label{eq:indiffp_formula}
\gam p \nu
	&=	\int \dd \xi \, \pt_X(\xi) \log \Big( \frac{p_X(\xi)}{\pt_X(\xi)} \Big)
			- \int \dd \xi \, \pt_X(\xi) \log \Big( \frac{1}{\pt_X(\xi)} \int \dd y \, p_{X,Y}(\xi,y) \ee^{\gam \nu  h(\xi,y)}  \Big) \\ &\quad
			- \log \int \dd \xi \, p_X(\xi) \Big( \frac{1}{\pt_X(\xi)}  \int \dd y \, p_{X,Y}(\xi,y) \ee^{\gam \nu  h(\xi,y)} \Big)^{-1} .
\end{align}
We note that $p$ in \eqref{eq:indiffp_formula} does not depend on $c$ since the utility function $U$ is in exponential form. More generally, the indifference price under other utility functions may depend on $c$.
\end{example}

\noindent
In addition to the indifference price, we also obtain the optimal static hedging $f^*$. In the literature, indifference pricing has been applied to determine the quantity of options in the static hedge; see \cite{IlhanSircar,LeungSircar}. In contrast, our approach derives the optimal general form of the static hedging position, instead of optimizing over a single parameter (i.e., quantity).

\section{Conclusion}
\label{sec:conclude}
In this paper, a static hedge for a portfolio of derivatives that depend on multiple underlying assets is constructed as the solution to an infinite dimensional optimization problem in which the objective function is the expected utility of the final position. The first order conditions are relatively easily obtained via variational methods. The resulting equations can be solved analytically when a single underlying risk is being hedged under exponential, logarithmic, power, or quadratic utility. The optimal hedge can then be implemented in practice by trading European put and call options written on the underlying asset.
\\[0.5em] 
In the case with two underlying assets, an iterative scheme for a (possibly expansive) Lipschitz map is shown to converge to the optimal position, assuming exponential utility and finite state space. Finally, we show how the scheme can be practically used by truncating and approximating with simple random variables a two dimensional normally distributed random vector.
\\[0.5em]
This paper also gives rise to some new questions for future research. A natural next step is to expand the risk exposure to include additional underlying assets or sources of risks, or to design static hedging portfolios for specific exotic options, such as Asian and other path-dependent options. The static hedging framework can be extended to a semi-static setup whereby static positions are updated at one or several random times in the future. Finally, it would be of practical interest to further examine the interplay between a static hedge and indifference price in the context of pricing and trading a contingent claim in incomplete markets.

\end{document}